\theoremstyle{remark}
\newtheorem*{note*}{\protect\notename}
\theoremstyle{plain}
\newtheorem*{assumption*}{\protect\assumptionname}
\theoremstyle{plain}
\newtheorem{thm}{\protect\theoremname}[section]
\theoremstyle{plain}
\newtheorem{lem}[thm]{\protect\lemmaname}
\newtheorem{proposition}{Proposition}[section]
\newtheorem{remark}{Remark}[section]
\newtheorem{corollary}{Corollary}[section]
\providecommand{\assumptionname}{Assumption}
\providecommand{\lemmaname}{Lemma}
\providecommand{\notename}{Note}
\providecommand{\theoremname}{Theorem}
\newcommand{\E}{\mathbb{E}}
\newcommand{\esp}{\mathbb{E}}
\newcommand \Esp[1]{\E\left[#1\right]}
\renewcommand{\P}{\mathbb{P}}
\newcommand{\bH}{\mathbb{H}}
\newcommand{\bS}{\mathbb{S}}
\newcommand{\R}{\mathbb{R}}
\newcommand{\1}{\mathbf{1}}
\newcommand{\cF}{\mathcal{F}}
\newcommand{\cN}{\mathcal{N}}
\newcommand{\dd}{{\rm d}}
\newcommand{\ds}{{\rm d}s}
\newcommand{\dt}{{\rm d}t}
\newcommand{\du}{{\rm d}u}
\newcommand{\dS}{{\rm d}S}
\newcommand{\dV}{{\rm d}V}
\newcommand{\dW}{{\rm d}W}
\newcommand{\CVaRa}{\mbox{ \bf{CVaR}}^\alpha}
\newcommand{\CVaRaFt}{\mbox{ \bf{CVaR}}^\alpha_{\cF_t}}
\newcommand{\CVaRaFs}{\mbox{ \bf{CVaR}}^\alpha_{\cF_s}}
\numberwithin{equation}{section}
\newcommand{\eqlnostar}[2]{\begin{align}\label{#1}#2\end{align}}
\newcommand{\eqstar}[1]{\begin{align*}#1\end{align*}}
\newcommand{\eq}[1]{\ifthenelse{\equal{#1}{*}}
  {\eqstar}
  {\eqlnostar{#1}}
 }
\newcommand{\red}[1]{{\color{black}#1}}
\newcommand{\dC}{d^{\mathrm{Call}}}
\newcommand{\dP}{d^{\mathrm{Put}}}
\newcommand{\sigC}{\sigma^{\mathrm{Call}}}
\newcommand{\sigP}{\sigma^{\mathrm{Put}}}
\newcommand{\maj}{>}
\newcommand{\mino}{<}
\newcommand{\Prob}{\mathbb P}
\newcommand{\til}{~}
\def\timenow{\@tempcnta\time
\@tempcntb\@tempcnta
\divide\@tempcntb60
\ifnum10>\@tempcntb0\fi\number\@tempcntb
:\multiply\@tempcntb60
\advance\@tempcnta-\@tempcntb
\ifnum10>\@tempcnta0\fi\number\@tempcnta}
\begin{document}
\title{Numerical approximations of McKean Anticipative Backward Stochastic Differential Equations\\ arising in Initial Margin requirements}
\thanks{This work has been carried out during the 6-weeks summerschool CEMRACS, July-August 2017, with the support of the Chaire Risques Financiers (Ecole Nationale des Ponts et Chauss\'ees, Ecole Polytechnique, Soci\'et\'e G\'en\'erale, Sorbonne Universit\'e - with the partnership of the Fondation du Risque).}
\author{A. Agarwal}\address{Adam Smith Business School, University of Glasgow, University Avenue, G12 8QQ Glasgow, United Kingdom. Email: {\tt  ankush.agarwal@glasgow.ac.uk}. This author research was conducted while at CMAP, Ecole Polytechnique, and is  part of the {\it Chaire Risque Financiers}.}
\author{S. De Marco}\address{Centre de Math\'ematiques Appliqu\'ees (CMAP), Ecole Polytechnique, Route de Saclay, 91128 Palaiseau Cedex, France. Email: {\tt  stefano.de-marco@polytechnique.edu}}
\author{E. Gobet}\address{CMAP, Ecole Polytechnique. Email: {\tt emmanuel.gobet@polytechnique.edu}. This author research is part of the {\it Chaire Risque Financiers} of the {\it Fondation du Risque} and of the {\it  Finance for Energy Market Research Centre}.}
\author{J. G. Lopez-Salas}\address{
Department of Mathematics, Faculty of Informatics, Universidade da Coru\~na, Campus de Elvi\~na s/n, 15071 - A Coru\~na, Spain. Email: {\tt jose.lsalas@udc.es}.
This author research was conducted while at CMAP, Ecole Polytechnique, and is  part of the {\it Chaire Risque Financiers}.}
\author{F. Noubiagain}\address{\red{This author research was conducted while at Laboratoire Manceau de Math\'{e}matiques, Le Mans Universit\'{e}, France. Email: {\tt  larissafanny@yahoo.fr}} }
\author{A. Zhou}\address{\red{This author research was conducted while at Universit\'e Paris-Est, CERMICS (ENPC), 77455, Marne-la-Vall\'ee, France. Email: {\tt alexandre.zhou@enpc.fr}}}

\begin{abstract}
We introduce a new class of anticipative backward stochastic differential equations with a dependence of McKean type on the law of the solution, that we name MKABSDE. We provide existence and uniqueness results in a general framework with relatively general regularity assumptions on the coefficients. We show how such stochastic equations arise within the modern paradigm of derivative pricing where a central counterparty (CCP) requires the members to deposit  variation and initial margins to cover their exposure. In the case when the initial margin is proportional to the Conditional Value-at-Risk (CVaR) of the contract price, we apply our general result to define the price as a solution of a MKABSDE. We provide several linear and non-linear simpler approximations, which we solve using different numerical (deterministic and Monte-Carlo) methods. 

{\bf Keywords:} non-linear pricing, CVaR initial margins, anticipative BSDE, weak non-linearity.

{\bf MSC2000:} 60H30, 65C05, 65C30
\end{abstract}

\begin{resume}
Nous introduisons une nouvelle famille d'\'equations diff\'erentielles stochastiques r\'etrogrades anticipatives ayant une d\'ependance par rapport \`a la loi de la solution, que nous appelons MKABSDE. Ces \'equations apparaissent dans le contexte moderne de la valorisation de d\'eriv\'es en pr\'esence d'appels de marge de la part d'une chambre de compensation. 
Nous d\'emontrons un r\'esultat d'existence et unicit\'e sous des hypoth\`eses relativement faibles sur les coefficients de l'\'equation. Dans le cas o\`u les appels de marge sont proportionnels \`a la VaR conditionnelle (CVaR) du prix du contrat, notre r\'esultat g\'en\'eral entra\^ine l'existence et unicit\'e pour le prix en tant que solution d'une MKABSDE. Nous consid\'erons plusieurs approximations lin\'eaires et non-lin\'eaires de cette \'equation, que nous abordons avec diff\'erentes m\'ethodes num\'eriques.
\medskip

\centerline{\Large First version: January 29, 2018.
\\
This version: \today}
\end{resume}

\maketitle

\section{Initial margin and McKean Anticipative BSDE (MKABSDE)}
\subsection{Financial context and motivation}
The paradigm of linear risk-neutral pricing of financial contracts
has changed in the last few years, influenced by the regulators.
Nowadays, for several type of trades banking institutions have to post collateral to a central counterparty (CCP, also called clearing house) in order to secure their positions.
On a daily basis, the CCP imposes to each member to post a certain amount computed according to the estimated exposure of their contracts.
The variation margin deposit corresponds to a collateral that compensates the daily fluctuations of the market value of a contract, while the initial margin deposit is intended to reduce the gap risk, which is the possible mark-to-market loss encountered during the liquidation period upon default of the contract's counterpart (see \cite[Chapter 3]{CrepeyBielecki_book2014}, or the Basel Committee on Banking Supervision document \cite{base:15}, for more details). 
In this work we focus only on the  initial margin requirement (IM for short), and we investigate how it affects the valuation and hedging of the contract.
As stated in \cite[p.11 3(d)]{base:15}, \emph{``IM protects the transacting parties from the potential future exposure that could arise from future changes in the mark-to-market value of the contract during the time it takes to close out and replace the position in the event that one or more counterparties default. The amount of initial margin reflects the size of the potential future exposure. It depends on a variety of factors,  [\dots] the expected duration of the contract closeout and replacement period, and can change over time."} 
In this work, we will consider IM deposits that are proportional to the Conditional Value-at-Risk (CVaR) of the contract value over a future period of length $\Delta$ (typically $\Delta=1$ week or $10$ days, standing for the closeout and replacement period).
We focus on CVaR rather than Value-at-Risk (VaR) due to its pertinent mathematical properties; it is indeed well established that CVaR is a coherent risk measure whereas VaR is not \cite{artz:delb:eber:heat:99}.

We make some distinctions in our analysis according to the way the \emph{contract value} is computed in the presence of IM.
While \cite{base:15} refers to a mark-to-market value of the contract that can be seen as an \emph{exogenous} value, we investigate the case where this value is \emph{endogenous} and is given by the value of the hedging portfolio including the additional IM costs.
\red{A similar setting is considered in Nie and Rutkowski \cite{Nie2016}.}
By doing so, we introduce a new non-linear pricing rule, that is: the value of the hedging portfolio $V_t$ together with its hedging component $\pi_t$ solve a stochastic equation including a term depending on the law of the solution (due to the CVaR). We
justify that this problem can be seen as a new type of anticipative
Backward Stochastic Differential Equation (BSDE) with McKean interaction \cite{mcke:66}. From now on, we refer to this kind of equation as MKABSDE, standing for McKean Anticipative BSDE; Section \ref{subsection:A first Anticipative BSDE with dependence in law} below gives a toy example of such a model. We derive stability estimates
for these MKABSDEs, under general Lipschitz conditions, and prove existence and uniqueness results.
In Section 
\ref{section:The Case of CVaR variation margin}, we verify that these results can be applied to a general complete It\^o market \cite{kara:shre:98}, when accounting for IM requirements. Then, we derive some approximations based on classical non-linear BSDEs whose purpose is to quantify the impact of choosing the reference price for the IM as exogenous or endogenous, and to compare with the case without IM.
\red{Essentially, in Theorem \ref{theorem: approximation:price} we prove that the hedging portfolio with exogenous or endogenous reference price for the IM coincide up to order 1 in $\Delta$ when $\Delta$ is small (which is compatible with $\Delta$ equal to a few days), while the difference between valuation \emph{with} or \emph{without} IM correction has a size of order $\sqrt \Delta$.
Finally, Section \ref{section:Numerical Examples} is devoted to numerical experiments: we solve the different approximating BSDEs using finite difference methods in dimension 1, and nested Monte Carlo and regression Monte Carlo methods in higher dimensions.}

\subsection{An example of anticipative BSDE with dependence in law}
\label{subsection:A first Anticipative BSDE with dependence in law}

We start with a simple financial example with IM requirements, in the case of a single tradable asset.
A more general version with a multidimensional It\^o market will be studied in Section \ref{section:The Case of CVaR variation margin}.
Let us assume that the price of a tradable asset, denoted $S$, evolves accordingly to a geometric Brownian motion
\begin{equation}
\dS_{t}=\mu S_{t}\dt+\sigma S_{t}\dW_{t},\label{eq:SDE}
\end{equation}
where $(\mu,\sigma)\in\R\times \R^+$ and $W$ is an one-dimensional Brownian motion. 

In the classical financial setting (see, for example, \cite{musi:rutk:05}), consider the situation where a trader wants to
sell a European option with maturity $T>0$ and payoff $\Phi\left(S_{T}\right)$,
and to hedge it dynamically with risky and riskless assets $S$ and $S^{0}$,
where $S_{t}^{0}=e^{rt}$ for $t\in[0,T]$ and $r$ is a risk-free interest rate.
We denote by $\left(V,\pi\right),$ the value of the self-financing portfolio and $\pi$
the amount of money invested in the risky asset, respectively. In order to ensure the replication of the payoff at maturity, the couple $\left(V,\pi\right)$
should solve the following stochastic equation
\begin{equation}\label{eq:0}
\begin{cases}
\dV_{t} & = \displaystyle r\left(V_{t}-\pi_{t}\right)\dt+\pi_t\frac{\dS_{t}}{S_{t}},\ t\in[0,T],\\
V_{T} & =  \Phi(S_{T}).
\end{cases}
\end{equation}
Eq \eqref{eq:0} is a BSDE since the terminal condition of $V$ is imposed.
Because all the coefficients are linear in $V$ and $\pi$, \eqref{eq:0} is a linear BSDE (see \cite{kpq1997} for a broad overview on BSDEs and their applications in finance). 

Accounting for IM requirement will introduce an additional cost in the above self-financing dynamics.
We assume that the required deposit is proportional to the CVaR of the portfolio
over $\Delta$ days (typically $\Delta = $ 10 days) at the risk-level $\alpha$ (typically $\alpha=99\%$).
The funding cost for this deposit is determined by an interest rate $R$.\footnote{This interest rate corresponds to the difference of a funding rate minus the interest rate paid by the CCP for the deposit, typically $R\approx3\%$} Therefore, the IM cost can be modelled as an additional term in the dynamics of the self-financing portfolio as 
\begin{align}
\dV_{t} & =  \left(r\left(V_{t}-\pi_{t}\right)-R \CVaRaFt\left(V_{t}-V_{t+\Delta}\right)\right)\dt+\pi_t\frac{\dS_{t}}{S_{t}} \label{eq:1},
\end{align}
where the CVaR of a random variable $L$, conditional on the underlying sigma-field $\cF_t$ at time $t$, is defined by (see \cite{rock:urya:00})
\begin{equation}
\label{eq:cvar}
\CVaRaFt(L)=\inf_{x\in \R}\Esp{\frac{(L-x)^+}{1-\alpha}+x \Big| \mathcal{F}_{t}}.
\end{equation}
Since $V_{t+\Delta}$ may be meaningless as $t$ gets close to $T$, in \eqref{eq:1} one should consider $V_{(t+\Delta)\land T}$ instead. Rewriting \eqref{eq:1} in integral form together with the replication constraint, we obtain a BSDE
\begin{align}
V_t & =  \Phi(S_T)+\int_t^T\left(-r\left(V_{s}-\pi_{s}\right)-\mu\pi_{s}+R \CVaRaFs\left(V_{s}-V_{(s+\Delta)\land T}\right)\right)\ds-\int_t^T \pi_s\sigma\dW_s,\ t\in[0,T]\label{eq:portfolio}.
\end{align}
The conditional CVaR term is anticipative and non-linear in the sense
of McKean \cite{mcke:66}, for it involves the law of future variations of the portfolio
conditional to the knowledge of the past. This is an example of McKean Anticipative BSDE, which we study in broader generality in Section \ref{section:A general MKABSDE}.

Coming back to the financial setting, $(V,\pi)$ stands for a valuation rule which treats the IM adjustment as endogenous (in the sense that CVaR is computed on $V$ itself).
One could alternatively consider that CVaR is related to an exogenous valuation (the so-called \emph{mark-to-market}), for instance the one due to \eqref{eq:0} (assuming that \eqref{eq:0} models the market evolution of the option price). 
Later in Section \ref{section:The Case of CVaR variation margin}, we give quantitative error bounds between these different valuation rules.
Without advocating one with respect to the other, we rather compare their values and estimate (theoretically and numerically) how well one of their output prices approximates the others. As a consequence, these results may serve as a support for banks and regulators for improving risk management and margin requirement rules.

\subsection{Literature review on anticipative BSDEs and comparison with our contribution}
BSDEs were introduced by Pardoux and Peng \cite{pp1990}.
Since then, the theoretical properties of BSDEs with different generators and terminal conditions have been extensively studied.
The link between Markovian BSDEs and partial differential equations (PDEs) was studied in \cite{pp1992}. Under some smoothness assumptions, \cite{pp1992} established that the solution of the Markovian BSDE corresponds to the solution of a semi-linear parabolic PDE.
In addition, several applications in finance have been proposed, in particular by El Karoui and co-authors \cite{kpq1997} who considered the application to European option pricing in the constrained case. 
In fact, \cite{kpq1997} showed that, under some constraints on the hedging strategy, the price of a contingent claim is given by the solution of a non-linear convex BSDE.

Recently, a new class of BSDEs called anticipated\footnote{We equivalently use the word anticipated or anticipative in this work.} BSDEs (ABSDEs for short) was introduced by Peng and Yang \cite{py2009}.  The main feature of this class is that the generator includes not only the value of the solution at the present, but also at a future date. In \cite{py2009} the existence, uniqueness and a comparison theorem for the solution is provided under a kind of Lipschitz condition which depends on the conditional expectation. One can also find more general formulations of ABSDE in Cheredito and Nam \cite{cn2017}.
As in the case of classical BSDEs, the question of weakening the Lipschitz condition considered in \cite{py2009} has  been tackled by Yang and Elliott \cite{ye2013}, who extended the existence theorem for ABSDEs from Lipschitz to continuous coefficients, and proved that the comparison theorem for anticipated BSDEs still holds. They also established a minimal solution.
 
At the same time, Buckdahn and Imkeller \cite{bi2009} introduced the so-called time-delayed BSDEs (see also Delong and Imkeller \cite{di2009, di2010}). 
As opposed to the ABSDEs of \cite{py2009}, in this case the generator depends on the values of the solution at the present and at past dates, weighted with a time delay function. 
Assuming that the generator satisfies a certain kind of Lipschitz assumption depending on a probability measure, Delong and Imkeller \cite{di2010} proved the existence and uniqueness of a solution for a sufficiently small time horizon or for a sufficiently small Lipschitz constant of the generator.
These authors also showed that, when the generator is independent of $y$ and for a small delay, existence and uniqueness hold for an arbitrary Lipschitz constant. Later, Delong and Imkeller \cite{delong2012} provided an application of time-delayed BSDEs to problems of pricing and hedging, and portfolio management.
This work focuses on participating contracts and variable annuities, which are worldwide life insurance products with capital protections, and on claims based on the performance of an underlying investment portfolio. 

\red{More recently, Cr\'{e}pey \emph{et al.} \cite{cess2017} have worked in a setting which is close to the problem we tackle here, introducing an application of ABSDEs to the problem of computing different types of valuation adjustments (XVAs) for derivative prices, related to funding (X=F), capital (X=K) and credit risk (X=C). 
In particular, they focus on the case where the variation margins of an OTC contract can be funded directly with the capital of the bank involved in the trade, giving rise to different terms in the portfolio evolution equation.} 
The connection of economic capital and funding valuation adjustment leads to an ABSDE, whose anticipated part consists of a conditional risk measure over the martingale part of the portfolio on a future time period. 
These authors have showed that the system of ABSDEs for the FVA and the KVA processes is well-posed.
Mathematically, the existence and uniqueness of the solution to the system is  established together with the convergence of Picard iterations.

Motivated by the dynamics of the self-financing portfolio \eqref{eq:portfolio}, we consider a type of ABSDEs (a McKean ABSDEs) where the generator depends on the value of the solution, but also on the law of the future trajectory, possibly up to maturity, in analogy with \cite{cess2017}.
We state a priori estimates on the differences between the solutions of two such MKABSDEs. Based on these estimates, we derive existence and uniqueness results via a fixed-point theorem.  \red{Comparing again with the close work \cite{cess2017}, one main difference is that the time horizon $\Delta$ in the IM setting is around one week, as apposed to one year for economic capital in \cite{cess2017}.
Consequently, the IM framework naturally lends itself to an asymptotic analysis as $\Delta$ goes to zero, giving rise to the approximating non linear or linear (non anticipative) BSDEs for which we provide error estimates and explore different numerical methods.}

\section{A general McKean Anticipative BSDE}
\label{section:A general MKABSDE}
In order to give meaning to \eqref{eq:portfolio} and to more general (multidimensional) cases such as \eqref{eq:CvarBSDE} below, we now introduce  a general mathematical setup for studying existence and uniqueness of solutions.

\subsection{Notation}
Let $T>0$ be the finite time horizon and let $(\Omega, \cF,\P)$ be a probability space equipped with a $d$-dimensional Brownian motion, where $d\geq1$. 
We denote $(\cF_t)_{t\in[0,T]}$ the filtration generated by $W$, completed with the $\P$-null sets of $\cF$.
Let $t\in[0,T],\beta \geq 0$ and $m\in\mathbb{N}^{*}$.
We will make use of the following notations:
\begin{itemize}
\item For any $a=\left(a_{1},...,a_{m}\right)\in\R^{m}$, $|a|=\sqrt{\sum_{i=1}^{m}a_{i}^{2}}$.
\item Given a process $\left(x_{s}\right)_{s\in[0,T]}$, we set $x_{t:T}:=\left(x_{s}\right)_{s\in[t,T]}$.
\item $L_{T}^{2}\left(\R^{m}\right)=\left\{ \R^{m}\text{-valued \ensuremath{\cF_{T}}-measurable random variables \ensuremath{\xi} such that \ensuremath{\Esp{|\xi|^{2}}<\infty}}\right\}.$
\item $\bH_{\beta,T}^{2}\left(\R^{m}\right)=\left\{ \R^{m}\text{-valued and \ensuremath{\cF}-adapted stochastic processes \ensuremath{\varphi}\ such that \ensuremath{\Esp{\int_{0}^{T} e^{\beta t}|\varphi_{t}|^{2}\dt}<\infty}}\right\}. $ For $\varphi\in\bH_{\beta,T}^{2}\left(\R^{m}\right)$,
we define $||\varphi||_{\bH^2_{\beta, T}}=\sqrt{\Esp{\int_{0}^{T}e^{\beta t}|\varphi_{t}|^{2}\dt}}$.
\item $\bS_{\beta,T}^{2}\left(\R^{m}\right)=\left\{ \text{Continuous processes \ensuremath{\varphi\in\text{\ensuremath{\bH}}_{\beta ,T}^{2}\left(\R^{m}\right)}\ such that \ensuremath{\Esp{\sup_{t\in[0,T]}e^{\beta t}|\varphi_{t}|^{2}}<\infty}}\right\} .$ For $\varphi\in\bS_{\beta,T}^{2}\left(\R^{m}\right)$,
we define $||\varphi||_{\bS^2_{\beta ,T}}=\sqrt{\Esp{\sup_{t\in[0,T]} e^{\beta t}|\varphi_{t}|^{2}}}$.\end{itemize} 
Note that $\bH_{\beta, T}^{2}\left(\R^{m}\right)=\bH_{0,T}^{2}\left(\R^{m}\right)$ and $\bS_{\beta, T}^{2}\left(\R^{m}\right)=\bS_{0,T}^{2}\left(\R^{m}\right)$, for any $\beta \ge 0$.
The additional degree of freedom given by the parameter $\beta$ in the definition of the space norm will be useful when deriving a priori estimates (see Lemma \ref{lem:NEKestimates}).

\subsection{Main result}
Our aim is to find a pair of processes $\left(Y,Z\right)\in\bS_{0, T}^{2}\left(\R\right)\times\bH_{0, T}^{2}\left(\R^{d}\right)$
satisfying 
\begin{eqnarray}
Y_{t}  =  \xi+\int_{t}^{T}f\left(s,Y_{s},Z_{s},\Lambda_{s}\left(Y_{s:T}\right)\right)\ds-\int_{t}^{T}Z_{s}\dW_{s},
\qquad t\in[0,T],
\label{eq:BSDE}
\end{eqnarray}
for a certain mapping $\Lambda_t(\cdot)$ to be defined below.
We call Equation \eqref{eq:BSDE} McKean Anticipative BSDE (MKABSDE)
with parameters $\left(f,\Lambda, \xi \right)$. 
In order to obtain existence and uniqueness of solutions, we require that the mappings $f$ and $\Lambda$ satisfy some suitable Lipschitz properties (specified below), and that the terminal condition $\xi$  be square integrable.

\begin{assumption*}[S] For any $y,z,\lambda\in\R\times\R^{d}\times\R$,
$f(\cdot,y,z,\lambda)$ is a $\cF$-adapted stochastic process with values in $\R$ and there exists a constant $C_{f}>0$ such that almost surely, for all
$(s,y_{1},z_{1},\lambda_{1}),(s,y_{2},z_{2},\lambda_{2})\in[0,T]\times\R\times\R^{d}\times\R$,
\[
\left|f\left(s,y_{1},z_{1},\lambda_{1}\right)-f\left(s,y_{2},z_{2},\lambda_{2}\right)\right|\leq C_{f}\left(|y_{1}-y_{2}|+|z_{1}-z_{2}|+|\lambda_{1}-\lambda_{2}|\right).
\]
Moreover, $\Esp{\int_{0}^{T}\left|f\left(s,0,0,0\right) \right|^{2} \ds}<\infty.$
\end{assumption*}

\begin{assumption*}[A] For any $X\in\bS_{0, T}^{2}\left(\R\right)$, $
\left(\Lambda_t\left(X_{t:T}\right)\right)_{t\in[0,T]}$ defines a stochastic process that belongs to 
$\bH_{0,T}^{2}(\R)$. 
There exist a constant $C_{\Lambda}>0$ and a family of \red{positive} measures $\left(\nu_{t}\right)_{t\in[0,T]}$ on $\R$ such that for every $t\in[0,T]$, $\nu_{t}$ has support included in $[t,T]$, $\nu_t \left([t,T]\right)=1$, and for any $y^{1},y^{2}\in\bS_{0,T}^{2}\left(\R\right)$,
we have 
\begin{align*}
\left|\Lambda_{t}\left(y_{t:T}^{1}\right)-\Lambda_{t}\left(y_{t:T}^{2}\right)\right|
& \leq
C_{\Lambda} \, \Esp{\int_{t}^{T}\left|y_{s}^{1}-y_{s}^{2}\right|\nu_{t}\left(\ds\right)\Big| \cF_{t}},\dt\otimes \dd \P\ a.e.\ .
\end{align*}
\end{assumption*}

\red{\noindent Note that, under Assumption (A), for every $\beta\geq0$ and every continuous path $x:[0,T]\rightarrow\R$ we have
\[
\int_{0}^{T}e^{\beta s}\int_{s}^{T}\left|x_{u}\right|\nu_{s}\left(\du\right)\ds
\leq T \sup_{t\in[0,T]} e^{\beta t}|x_t|.
\]}
We will say that a function $\tilde{f}$ (resp.\til a mapping $\tilde{\Lambda}$) satisfies Assumption (S) (resp. (A)) if the assumption holds for the choice $f = \tilde{f}$ (resp.\til$\Lambda = \tilde{\Lambda}$). We can now give the main result of this section.
\begin{thm}
\label{thm:MKABSDE}Under Assumptions \emph{(S)} and \emph{(A)}, for any terminal condition $\xi\in L_{{T}}^{2}\left(\R\right)$ the BSDE \eqref{eq:BSDE} has a unique solution $\left(Y,Z\right)\in\bS_{0, T}^{2}\left(\R\right)\times\bH_{0, T}^{2}\left(\R^{d}\right)$.
\end{thm}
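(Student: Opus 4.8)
The plan is to obtain $(Y,Z)$ as the fixed point of a Picard-type map on the Banach space $E:=\bS^2_{0,T}(\R)\times\bH^2_{0,T}(\R^d)$, equipped with the (equivalent, for varying $\beta$) weighted norm $\|(U,V)\|_\beta^2:=\|U\|_{\bS^2_{\beta,T}}^2+\|V\|_{\bH^2_{\beta,T}}^2$, the parameter $\beta\ge 0$ being tuned large at the end. The key device is to \emph{freeze} the anticipative/McKean argument: to $(U,V)\in E$ we associate the driver $\hat f^U(s,y,z):=f\bigl(s,y,z,\Lambda_s(U_{s:T})\bigr)$ and let $\Theta(U,V):=(Y,Z)$ be the solution of the \emph{classical} (non-anticipative, non-McKean) Lipschitz BSDE with driver $\hat f^U$ and terminal datum $\xi$. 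A fixed point of $\Theta$ is exactly a solution of \eqref{eq:BSDE}, and conversely; showing that $\Theta$ is a strict contraction then yields existence and uniqueness simultaneously.

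First I would check that $\Theta$ is well defined on $E$. For $U\in\bS^2_{0,T}(\R)$, Assumption (A) ensures $\bigl(\Lambda_s(U_{s:T})\bigr)_{s\in[0,T]}\in\bH^2_{0,T}(\R)$; together with Assumption (S) this gives that $\hat f^U$ is $C_f$-Lipschitz in $(y,z)$ uniformly in $(s,\omega)$ and that $\E\int_0^T|\hat f^U(s,0,0)|^2\,\ds<\infty$. By the classical existence–uniqueness theorem for Lipschitz BSDEs (Pardoux--Peng), the auxiliary BSDE admits a unique solution $(Y,Z)\in\bS^2_{0,T}(\R)\times\bH^2_{0,T}(\R^d)$, so $\Theta:E\to E$ is well defined (and, incidentally, depends on $U$ only).

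Next I would prove the contraction estimate, which is where the a priori bounds of Lemma~\ref{lem:NEKestimates} enter. Writing $(Y^i,Z^i)=\Theta(U^i,V^i)$ and $\delta U=U^1-U^2$, $\delta Y=Y^1-Y^2$, $\delta Z=Z^1-Z^2$, and noting that the two auxiliary BSDEs share the terminal condition $\xi$ and differ only through the frozen argument, the $\bS^2\times\bH^2$ a priori estimate gives, for every $\beta$ above a threshold depending only on $C_f$,
\[
\|\delta Y\|_{\bS^2_{\beta,T}}^2+\|\delta Z\|_{\bH^2_{\beta,T}}^2
\;\le\;\frac{c}{\beta}\,\E\!\int_0^T e^{\beta s}\bigl|f\bigl(s,Y^2_s,Z^2_s,\Lambda_s(U^1_{s:T})\bigr)-f\bigl(s,Y^2_s,Z^2_s,\Lambda_s(U^2_{s:T})\bigr)\bigr|^2\,\ds
\;\le\;\frac{c\,C_f^2}{\beta}\,\E\!\int_0^T e^{\beta s}\bigl|\Lambda_s(U^1_{s:T})-\Lambda_s(U^2_{s:T})\bigr|^2\,\ds,
\]
with $c$ universal. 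To estimate the last term I would use Assumption (A), then conditional Jensen (both for the conditional expectation and for the probability measure $\nu_s$) and the tower property, and finally the elementary inequality of the Note applied pathwise to the nonnegative continuous path $s\mapsto|\delta U_s|^2$:
\[
\E\!\int_0^T e^{\beta s}\bigl|\Lambda_s(U^1_{s:T})-\Lambda_s(U^2_{s:T})\bigr|^2\,\ds
\;\le\;C_\Lambda^2\,\E\!\int_0^T e^{\beta s}\!\int_s^T|\delta U_u|^2\,\nu_s(\du)\,\ds
\;\le\;C_\Lambda^2\,T\,\E\Bigl[\sup_{t\in[0,T]}e^{\beta t}|\delta U_t|^2\Bigr]
\;=\;C_\Lambda^2\,T\,\|\delta U\|_{\bS^2_{\beta,T}}^2 .
\]
Combining the two displays yields $\|\Theta(U^1,V^1)-\Theta(U^2,V^2)\|_\beta^2\le\tfrac{c\,C_f^2\,C_\Lambda^2\,T}{\beta}\,\|(U^1,V^1)-(U^2,V^2)\|_\beta^2$, so taking $\beta$ large enough makes $\Theta$ a $\tfrac12$-contraction on $(E,\|\cdot\|_\beta)$, and Banach's fixed point theorem delivers the unique solution of \eqref{eq:BSDE}.

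The crux I expect is precisely the control of the anticipative McKean term: one must turn the forward-looking conditional quantity $\E[\int_s^T|\delta U_u|\,\nu_s(\du)\mid\cF_s]$ into a genuine norm of $\delta U$ \emph{without} picking up a factor that grows with $\beta$ — otherwise the $1/\beta$ gain from the BSDE a priori estimate would be wiped out and the contraction would collapse. This is exactly what the preliminary Note provides, at the cost of measuring $\delta U$ in the supremum ($\bS^2$) norm rather than the $\bH^2$ norm; this is the structural reason why the component $Y$ must be sought in $\bS^2$ and why the a priori estimates are of $\bS^2\times\bH^2$ type. (Uniqueness can also be read off directly from the same $\bS^2\times\bH^2$ a priori estimate applied to two solutions of \eqref{eq:BSDE} with identical data, once $\beta$ is taken large, bypassing the fixed-point construction.)
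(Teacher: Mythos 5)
Your proof is correct and follows essentially the same route as the paper: a Banach fixed point in the $\beta$-weighted space $\bS_{\beta,T}^{2}\left(\R\right)\times\bH_{\beta,T}^{2}\left(\R^{d}\right)$, with the anticipative term handled exactly as in the paper (Assumption (A), conditional Jensen, and the deterministic inequality of the Note, producing the factor $T C_{\Lambda}^{2}\,\|\delta U\|_{\bS_{\beta,T}^{2}}^{2}$). The only cosmetic difference is that you freeze just the McKean argument $\Lambda_{s}(U_{s:T})$ and solve a genuine Lipschitz BSDE at each iteration, whereas the paper freezes the whole driver $f_{s}\left(U_{s},V_{s},\Lambda_{s}(U_{s:T})\right)$ and invokes its a priori lemma with $C_{f}=0$; both rest on the same contraction mechanism, namely a free Young parameter (your $1/\beta$, the paper's $1/\mu^{2}$) made large at the cost of enlarging $\beta$.
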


\subsection{Proof of Theorem \ref{thm:MKABSDE}}
The proof uses classical arguments. We first establish apriori estimates
in the same spirit as in \cite{kpq1997} on the solutions to the BSDE.
Then for a suitable constant $\beta\geq0$, we use Picard's fixed point method in the space $\bS_{\beta, T}^{2}\left(\R\right)\times\bH_{\beta, T}^{2}\left(\R^{d}\right)$
to obtain existence and uniqueness of a solution to Equation \eqref{eq:BSDE}.

\begin{lem}\label{lem:NEKestimates}
Let $\left(Y^{1},Z^{1}\right),\left(Y^{2},Z^{2}\right)\in\bS_{0,T}^{2}\left(\R\right)\times\bH_{0,T}^{2}\left(\R^{d}\right)$
be solutions to MKABSDE \eqref{eq:BSDE} associated respectively
to the parameters $\left(f^{1},\Lambda^{1},\text{\ensuremath{\xi}}^{1}\right)$
and $\left(f^{2},\Lambda^{2},\text{\ensuremath{\xi}}^{2}\right)$. We assume that $f^1$ satisfies Assumption $(S)$ and that $\Lambda^1$ satisfies Assumption $(A)$. Let us define $\delta Y:=Y^{1}-Y^{2}$, $\delta Z:=Z^{1}-Z^{2}$, $\delta \xi:=\xi^{1}-\xi^{2}$. Finally, let us define for
$s\in[0,T]$, $$\delta_{2}f_{s}=f^{1}\left(s,Y_{s}^{2},Z_{s}^{2},\Lambda^{2}\left(Y_{s:T}^{2}\right)\right)-f^{2}\left(s,Y_{s}^{2},Z_{s}^{2},\Lambda^{2}\left(Y_{s:T}^{2}\right)\right),\quad\text
{and}\quad \delta_{2}\Lambda_{s}=\Lambda_{s}^{1}\left(Y_{s:T}^{2}\right)-\Lambda_{s}^{2}\left(Y_{s:T}^{2}\right).$$
Then there exists a constant $C>0$ such that for $\mu > 0$, we have for $\beta$ large enough
\begin{align*}
||\delta Y||_{\bS_{\beta,T}^{2}}^2 & \leq  C\left(e^{\beta T}\Esp{|\delta \xi|^{2}}+\frac{1}{\mu^{2}}\left(||\delta_{2}f||_{\bH_{\beta,T}^{2}}^{2}+C_{f^1}||\delta_{2}\Lambda||_{\bH_{\beta,T}^{2}}^{2}\right)\right),\\
||\delta Z||_{\bH_{\beta,T}^{2}}^2 & \leq  C\left(e^{\beta T}\Esp{|\delta \xi|^{2}}+\frac{1}{\mu^{2}}\left(||\delta_{2}f||_{\bH_{\beta,T}^{2}}^{2}+C_{f^1}||\delta_{2}\Lambda||_{\bH_{\beta,T}^{2}}^{2}\right)\right).
\end{align*}
 \end{lem}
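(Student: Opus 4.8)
The plan is to follow the classical derivation of \emph{a priori} estimates for BSDEs (in the spirit of \cite{kpq1997}), the genuinely new ingredient being the treatment of the anticipative term through the elementary inequality recalled just below Assumption (A). First I would write the equation solved by $(\delta Y,\delta Z)$ and decompose the driver difference: abbreviating $f^i_s:=f^i\big(s,Y^i_s,Z^i_s,\Lambda^i_s(Y^i_{s:T})\big)$,
\begin{align*}
f^1_s-f^2_s=\Big[f^1\big(s,Y^1_s,Z^1_s,\Lambda^1_s(Y^1_{s:T})\big)-f^1\big(s,Y^2_s,Z^2_s,\Lambda^1_s(Y^2_{s:T})\big)\Big]+\Big[f^1\big(s,Y^2_s,Z^2_s,\Lambda^1_s(Y^2_{s:T})\big)-f^1\big(s,Y^2_s,Z^2_s,\Lambda^2_s(Y^2_{s:T})\big)\Big]+\delta_2f_s .
\end{align*}
By Assumption (S) the first bracket is $\le C_{f^1}\big(|\delta Y_s|+|\delta Z_s|+|\Lambda^1_s(Y^1_{s:T})-\Lambda^1_s(Y^2_{s:T})|\big)$ and the second is $\le C_{f^1}|\delta_2\Lambda_s|$, while Assumption (A) applied to $\Lambda^1$ gives $|\Lambda^1_s(Y^1_{s:T})-\Lambda^1_s(Y^2_{s:T})|\le C_{\Lambda^1}\,\Esp{\int_s^T|\delta Y_u|\,\nu_s(\du)\,\Big|\,\cF_s}$ $\dt\otimes\dd\P$-a.e. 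Thus $|f^1_s-f^2_s|$ is pointwise bounded by the sum of $C_{f^1}(|\delta Y_s|+|\delta Z_s|)$, the anticipative term $C_{f^1}C_{\Lambda^1}\,\Esp{\int_s^T|\delta Y_u|\nu_s(\du)\big|\cF_s}$, the term $C_{f^1}|\delta_2\Lambda_s|$, and $|\delta_2 f_s|$.

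Next I would apply It\^o's formula to $e^{\beta t}|\delta Y_t|^2$. Taking $t=0$ and expectations controls $\beta\|\delta Y\|^2_{\bH^2_{\beta,T}}+\|\delta Z\|^2_{\bH^2_{\beta,T}}$; taking the supremum over $t$, then expectations, and applying the Burkholder--Davis--Gundy inequality to $\int_\cdot^T e^{\beta s}\delta Y_s\delta Z_s\dW_s$ (after a routine localisation to make it a true martingale) controls $\|\delta Y\|^2_{\bS^2_{\beta,T}}$. In the cross term $2\,e^{\beta s}\delta Y_s(f^1_s-f^2_s)$ I would use Young's inequality piecewise: the $|\delta Z_s|$-piece is split so as to reabsorb a fraction of $\|\delta Z\|^2_{\bH^2_{\beta,T}}$; the $|\delta_2 f_s|$-piece via $2ab\le\mu^2a^2+\mu^{-2}b^2$ and the $C_{f^1}|\delta_2\Lambda_s|$-piece via $2\sqrt{C_{f^1}}a\cdot\sqrt{C_{f^1}}b\le\mu^2C_{f^1}a^2+\mu^{-2}C_{f^1}b^2$, which produces precisely $\mu^{-2}\big(\|\delta_2f\|^2_{\bH^2_{\beta,T}}+C_{f^1}\|\delta_2\Lambda\|^2_{\bH^2_{\beta,T}}\big)$; the surviving $e^{\beta s}|\delta Y_s|^2$ contributions are absorbed by $\beta\int e^{\beta s}|\delta Y_s|^2\ds$ once $\beta$ is large enough (depending on $C_{f^1},C_{\Lambda^1},T,\mu$). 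For the anticipative piece, conditional Jensen used twice (first on the conditional expectation, then on the probability measure $\nu_s$) followed by the inequality recalled below Assumption (A) applied pathwise to $|\delta Y|^2$ gives, for any $\eta>0$,
\begin{align*}
2C_{f^1}C_{\Lambda^1}\,\Esp{\int_0^T e^{\beta s}|\delta Y_s|\,\Esp{\int_s^T|\delta Y_u|\nu_s(\du)\Big|\cF_s}\ds}\le\eta\,\|\delta Y\|^2_{\bH^2_{\beta,T}}+\frac{(C_{f^1}C_{\Lambda^1})^2 T}{\eta}\,\|\delta Y\|^2_{\bS^2_{\beta,T}} .
\end{align*}

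The main obstacle — and the only place where the argument leaves the classical track — is precisely this last term: contrary to the non-anticipative case, it cannot be absorbed by the $\beta$-term and is only dominated by the supremum norm $\|\delta Y\|_{\bS^2_{\beta,T}}$, so the $\bH^2$- and $\bS^2$-estimates become coupled and must be closed together. To do so I would keep $\eta$ small in \emph{both} estimates (which forces $\beta$ even larger), obtaining schematically a first inequality $(\beta-c)\|\delta Y\|^2_{\bH^2_{\beta,T}}+\tfrac12\|\delta Z\|^2_{\bH^2_{\beta,T}}\le e^{\beta T}\Esp{|\delta\xi|^2}+\eta\|\delta Y\|^2_{\bS^2_{\beta,T}}+\mu^{-2}\big(\|\delta_2f\|^2_{\bH^2_{\beta,T}}+C_{f^1}\|\delta_2\Lambda\|^2_{\bH^2_{\beta,T}}\big)$ and a second inequality $\|\delta Y\|^2_{\bS^2_{\beta,T}}\le\tfrac14\|\delta Y\|^2_{\bS^2_{\beta,T}}+c'\big(e^{\beta T}\Esp{|\delta\xi|^2}+\|\delta Y\|^2_{\bH^2_{\beta,T}}+\|\delta Z\|^2_{\bH^2_{\beta,T}}+\mu^{-2}(\cdots)\big)$. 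Reading off from the first that $\|\delta Y\|^2_{\bH^2_{\beta,T}}$ carries the factor $(\beta-c)^{-1}$ and that $\|\delta Z\|^2_{\bH^2_{\beta,T}}$ is bounded by a constant times $(\cdots)$ — both with a residual $\|\delta Y\|^2_{\bS^2_{\beta,T}}$-contribution of size $O(\eta)$ — and substituting into the second, the coefficient of $\|\delta Y\|^2_{\bS^2_{\beta,T}}$ on the right becomes $O(\eta)+O(1/\beta)$, hence $<1$ once $\eta$ is small and $\beta$ large; absorbing it yields the claimed bound on $\|\delta Y\|^2_{\bS^2_{\beta,T}}$, and feeding it back into the first inequality yields the bound on $\|\delta Z\|^2_{\bH^2_{\beta,T}}$. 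Keeping track of constants shows the final $C$ may be taken to depend only on $C_{f^1},C_{\Lambda^1}$ and $T$, uniformly in $\mu>0$ and (for $\beta$ large) in $\beta$.
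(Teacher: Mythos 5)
Your proposal is correct and follows essentially the same route as the paper: the same three-way decomposition of the driver difference, It\^o's formula on $e^{\beta t}|\delta Y_t|^2$ with Young's inequality calibrated so that the $\delta_2 f$ and $\delta_2\Lambda$ pieces produce exactly the $\mu^{-2}$-weighted terms, control of the anticipative term by $\|\delta Y\|^2_{\bS^2_{\beta,T}}$ via conditional Jensen and the deterministic inequality below Assumption (A), and a BDG-based closing of the coupled $\bH^2$/$\bS^2$ estimates (the paper's $\Gamma(\gamma,\lambda)>0$ step). The only quibble is notational: in your displayed Young inequality for the anticipative term the weights $\eta$ and $(C_{f^1}C_{\Lambda})^2T/\eta$ should be interchanged so that the small factor multiplies $\|\delta Y\|^2_{\bS^2_{\beta,T}}$ and the large one multiplies $\|\delta Y\|^2_{\bH^2_{\beta,T}}$ (where it is absorbed by taking $\beta$ large), consistently with your subsequent schematic inequality.
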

\begin{proof}
The proof is based on similar arguments used in \cite{kpq1997}. Let us use the decomposition:
\eqstar{
|f^1(s,Y^1_s,Z^1_s, \Lambda^1_s(Y^1_{s:T}))& - f^2(s,Y^2_s,Z^2_s, \Lambda^2_s(Y^2_{s:T}))|\\
 &\leq  \left|f^1(s,Y^1_s,Z^1_s, \Lambda^1_s(Y^1_{s:T})) - f^1(s,Y^2_s,Z^2_s, \Lambda^2_s(Y^2_{s:T})) \right| \\
&\quad +\left|f^1(s,Y^2_s,Z^2_s, \Lambda^2_s(Y^2_{s:T})) - f^2(s,Y^2_s,Z^2_s, \Lambda^2_s(Y^2_{s:T}))\right|\\
& \leq   C_{f^1}\left(|\delta Y_s|+ |\delta Z_s|+ | \Lambda^1_s(Y^1_{s:T})-  \Lambda^2_s(Y^2_{s:T})|\right)+ |\delta_2 f_s |\\
& \leq   C_{f^1}\left(|\delta Y_s|+ |\delta Z_s|+| \Lambda^1_s(Y^1_{s:T})-  \Lambda^1_s(Y^2_{s:T})| +|\delta_2 \Lambda_s| \right)+ |\delta_2 f_s |.
}
By It\^{o}'s lemma on the process $t \rightarrow e^{\beta t} |\delta Y_t|^2$, where $\beta \geq 0$, and using the previous inequality, we have that
\begin{align}
& e^{\beta t} |\delta Y_t|^2 + \beta \int_t^T e^{\beta s} |\delta Y_s|^2 \ds + \int_t^T e^{\beta s} |\delta Z_s |^2 \ds\nonumber   \\
& = e^{\beta T}|\delta \xi|^2 + 2\int_t^T e^{\beta s} \delta Y_s \left( f^1(s,Y^1_s,Z^1_s, \Lambda^1_s(Y^1_{s:T})) - f^2(s,Y^2_s,Z^2_s, \Lambda^2_s(Y^2_{s:T})) \right) \ds \nonumber - 2\int_t^T e^{\beta s} \delta Y_s \delta Z_s \dW_s \nonumber \\
& \leq e^{\beta T}|\delta \xi|^2 + 2 \int_t^T e^{\beta s} |\delta Y_s| \left( C_{f^1}\left( |\delta Y_s|+ |\delta Z_s|+  | \Lambda^1_s(Y^1_{s:T})-  \Lambda^1_s(Y^2_{s:T})| +|\delta_2 \Lambda_s|\right) + |\delta_2 f_s | \right) \ds \nonumber \\
&  \quad- 2\int_t^T e^{\beta s} \delta Y_s \delta Z_s \dW_s.\label{eq:estimates2}
\end{align}
Applying Young's inequality with $\lambda, \mu \neq 0$, we have
\begin{align*}
 2 |\delta Y_s|&\left( C_{f^1}(|\delta Z_s| + | \Lambda^1_s(Y^1_{s:T})-  \Lambda^1_s(Y^2_{s:T})| + |\delta_2 \Lambda_s|) + |\delta_2 f_s |\right)\\
& \leq \frac{C_{f^1}}{\lambda^2}|\delta Z_s|^2 + \lambda^2 C_{f^1} |\delta Y_s|^2 + \frac{C_{f^1}}{\lambda^2}|\Lambda^1_s(Y^1_{s:T})-  \Lambda^1_s(Y^2_{s:T})|^2 + \lambda^2 C_{f^1} |\delta Y_s|^2\\
&  \quad+ \frac{C_{f^1}}{\mu^2} |\delta_2 \Lambda_s|^2 + \mu^2 C_{f^1} |\delta Y_s|^2 + \frac{1}{\mu^2} |\delta_2 f_s|^2 + \mu^2  |\delta Y_s|^2\\
& \leq \Big( \mu^2 + C_{f^1}(\mu^2 +  2\lambda^2) \Big) |\delta Y_s|^2 +\frac{C_{f^1}}{\lambda^2}|\delta Z_s|^2 +\frac{C_{f^1}}{\lambda^2}|\Lambda^1_s(Y^1_{s:T})-  \Lambda^1_s(Y^2_{s:T})|^2 \\
& \quad+ \frac{C_{f^1}}{\mu^2} |\delta_2 \Lambda_s|^2  +\frac{1}{\mu^2} |\delta_2 f_s|^2.
\end{align*}
Then plug this bound into \eqref{eq:estimates2} to get
\begin{align}
 e^{\beta t} |\delta Y_t|^2 &+ \beta \int_t^T e^{\beta s} |\delta Y_s|^2 \ds + \int_t^T e^{\beta s} |\delta Z_s |^2 \ds  \nonumber \\
& \leq e^{\beta T}|\delta \xi|^2 +  \Big( \mu^2 + C_{f^1}(2+ \mu^2 + 2 \lambda^2) \Big) \int_t^T e^{\beta s} |\delta Y_s|^2 \ds + \frac{C_{f^1}}{\lambda^2} \int_t^T e^{\beta s} |\delta Z_s|^2 \ds \nonumber\\
&  \quad+ \frac{C_{f^1}}{\lambda^2} \int_t^T e^{\beta s} |\Lambda^1_s(Y^1_{s:T})-  \Lambda^1_s(Y^2_{s:T})|^2 \ds + \frac{C_{f^1}}{\mu^2} \int_t^T e^{\beta s} |\delta_2 \Lambda_s|^2 \ds  +\frac{1}{\mu^2} \int_t^T e^{\beta s} |\delta_2 f_s|^2 \ds \nonumber\\
&\quad- 2\int_t^T e^{\beta s} \delta Y_s \delta Z_s \dW_s.\label{eq:estimates3}
\end{align}
Choosing $\lambda^2 > C_{f^1}$ and 
\begin{equation}
\label{eq:beta:choice}
\beta \geq \mu^2 + C_{f^1}(2+ \mu^2 + 2 \lambda^2),
\end{equation}
we get from \eqref{eq:estimates3} that
\begin{align}
\Esp{\int_t^T e^{\beta s} |\delta Z_s |^2 \ds}
 &\leq \frac{\lambda^2}{\lambda^2 - C_{f^1}}\Esp{ e^{\beta T}|\delta \xi|^2   + \frac{C_{f^1}}{\lambda^2} \int_t^T e^{\beta s} |\Lambda^1_s(Y^1_{s:T})-  \Lambda^1_s(Y^2_{s:T})|^2 \ds} \nonumber \\
& \quad+\frac{\lambda^2}{\lambda^2 - C_{f^1}} \Esp{ \frac{C_{f^1}}{\mu^2} \int_t^T e^{\beta s} |\delta_2 \Lambda_s|^2 \ds  +\frac{1}{\mu^2} \int_t^T e^{\beta s} |\delta_2 f_s|^2 \ds}.\nonumber
\end{align}
Here we have used that the stochastic integral in \eqref{eq:estimates3} is a true martingale, by invoking $\delta Y\in \bS_{0,T}^{2}\left(\R\right)$, $\delta Z \in \bH_{0,T}^{2}\left(\R^{d}\right)$, computations similar to \eqref{eq:estimates5bis} and a localization procedure.
From  \eqref{eq:estimates3} we also have that
\begin{align}
&\Esp{ \sup_{t\in[0,T]} e^{\beta t} |\delta Y_t|^2 +  \left(1- \frac{C_{f^1}}{\lambda^2}\right) \int_0^T e^{\beta s} |\delta Z_s |^2 \ds } \nonumber\\
& \leq \E \Bigg[e^{\beta T}|\delta \xi|^2 + \frac{C_{f^1}}{\lambda^2} \int_0^T e^{\beta s} |\Lambda^1_s(Y^1_{s:T})-  \Lambda^1_s(Y^2_{s:T})|^2 \ds+ \frac{C_{f^1}}{\mu^2} \int_0^T e^{\beta s} |\delta_2 \Lambda_s|^2 \ds + \frac{1}{\mu^2} \int_0^T e^{\beta s} |\delta_2 f_s|^2 \ds\nonumber \\
&  \qquad + 2 \sup_{t\in[0,T]} \left|\int_t^T e^{\beta s}  \delta Y_s \delta Z_s \dW_s \right|\Bigg].\label{eq:estimates1}
\end{align}
As $\Lambda^1$ satisfies Assumption (A), Jensen's inequality yields
\begin{align}
 \Esp{ \int_0^T e^{\beta s} |\Lambda^1_s(Y^1_{s:T})-  \Lambda^1_s(Y^2_{s:T})|^2 \ds}   &\leq C^2_{\Lambda} \Esp{ \int_0^T e^{\beta s} \int_s^T |\delta Y_u|^2 \nu_s(\du) \ds} \nonumber\\
 &\leq T C^2_{\Lambda} \Esp{ \sup_{t\in[0,T]} e^{\beta t}|\delta Y_t|^2 }. \label{eq:estimates5}
\end{align}
By the Burkholder-Davis-Gundy inequality, there exists a positive constant $C_1$ such that
\begin{align}
\Esp{\sup_{t\in[0,T]} \Big|\int_t^T e^{\beta s}  \delta Y_s \delta Z_s \dW_s \Big|} & \leq C_1 \Esp{\left(\int_0^T e^{2\beta s}  |\delta Y_s|^2 |\delta Z_s|^2 \ds \right)^{1/2}}\nonumber\\
& \leq C_1 \Esp{\left(\sup_{s\in[0,T]} e^{\beta s}  |\delta Y_s|^2 \right)^{1/2} \left(\int_0^T e^{\beta s}  |\delta Z_s|^2 \ds \right)^{1/2}}.
\label{eq:estimates5bis}
\end{align}
Therefore, by Young's inequality with $\gamma > 0$, we have
\begin{align}
2 \Esp{ \sup_{t\in[0,T]} \left|\int_t^T e^{\beta s}  \delta Y_s \delta Z_s \dW_s \right|}& \leq \frac{C_1}{\gamma^2}\Esp{ \sup_{t\in[0,T]} e^{\beta t}  |\delta Y_t|^2} + \gamma^2 C_1 \Esp{ \int_0^T e^{\beta s}  |\delta Z_s|^2 \ds }\nonumber \\
& \leq \frac{C_1}{\gamma^2}\Esp{ \sup_{t\in[0,T]} e^{\beta t}  |\delta Y_t|^2 }
+ \frac{\gamma^2 C_1   \lambda^2}{\lambda^2 - C_{f^1}}\E\Bigg[ e^{\beta T}|\delta \xi|^2 + \frac{C_{f^1}}{\mu^2} \int_t^T e^{\beta s} |\delta_2 \Lambda_s|^2 \ds   \nonumber \\
&\quad
+ \frac{1}{\mu^2} \int_t^T e^{\beta s} |\delta_2 f_s|^2 \ds \Biggr]
+ \frac{\gamma^2 C_1   \lambda^2}{\lambda^2 - C_{f^1}}\frac{C_{f^1}}{\lambda^2} 
T C^2_{\Lambda} \Esp{ \sup_{t\in[0,T]} e^{\beta t}|\delta Y_t|^2  }.
\label{eq:estimates4}
\end{align}
Combining the inequalities \eqref{eq:estimates1}--\eqref{eq:estimates4} leads to
\begin{align*}
& \left(1 - \frac{C_1}{\gamma^2} -  \frac{T C_{f^1} C^2_{\Lambda}}{\lambda^2} - \frac{T C_{f^1}  C_1 C^2_{\Lambda} \gamma^2  }{\lambda^2 - C_{f^1}} \right)\Esp{ \sup_{t\in[0,T]} e^{\beta t} |\delta Y_t|^2}+  \Big(1- \frac{C_{f^1}}{\lambda^2}\Big) \Esp{\int_0^T e^{\beta s} |\delta Z_s |^2 \ds}\\
& \leq \Big( 1+  \frac{\gamma^2 C_1 \lambda^2}{\lambda^2 - C_{f^1}} \Big) \left( \Esp{ e^{\beta T}|\delta \xi|^2} + \frac{1}{\mu^2} \Esp{ C_{f^1}\int_0^T e^{\beta s} |\delta_2 \Lambda_s|^2 \ds  + \int_0^T e^{\beta s} |\delta_2 f_s|^2 \ds}\right).
\end{align*}
Let us define a continuous function $\Gamma$ by setting
\[
\Gamma(\gamma, \lambda)=1 - \frac{C_1}{\gamma^2} -  \frac{T C_{f^1} C^2_{\Lambda}}{\lambda^2} - \frac{T C_{f^1}  C_1 C^2_{\Lambda} \gamma^2  }{\lambda^2 - C_{f^1}}
\]
for any $\gamma>0$ and any $\lambda >0$ with $\lambda^2> C_{f^1}$. Observe that  if we set $\gamma(\lambda) = \sqrt{\lambda}$ with $\lambda>0$, we have $\lim_{\lambda\rightarrow\infty} \Gamma(\gamma(\lambda),\lambda) = 1$, so that there exist $\lambda,\gamma$ large enough  such that $\Gamma(\gamma,\lambda) > 0$. For such a choice of $\gamma$ and $\lambda$, we obtain the announced result with the constant
\[
C = \frac{ 1+  \frac{C_1 \gamma^2  \lambda^2}{\lambda^2 - C_{f^1}}}{\min\left(\Gamma(\gamma,\lambda), 1-\frac{C_{f^1}}{\lambda^2}\right)}.
\]
Recall that $\beta$ is chosen according to the value of $\lambda$ (see inequality \eqref{eq:beta:choice}).
\end{proof}

\noindent  \textbf{Proof of Theorem \ref{thm:MKABSDE}.}
We use the previous apriori estimates in the case where $(Y^1,Z^1)$ and $(Y^2,Z^2)$ solve respectively the BSDEs
\begin{align*}
Y_{t}^{1} & =  \xi+\int_{t}^{T}f_{s}\left(U_{s}^{1},V_{s}^{1},\Lambda_{s}\left(U_{s:T}^{1}\right)\right)\ds-\int_{t}^{T}Z_{s}^{1}\dW_{s},\\
Y_{t}^{2} & =  \xi+\int_{t}^{T}f_{s}\left(U_{s}^{2},V_{s}^{2},\Lambda_{s}\left(U_{s:T}^{2}\right)\right)\ds-\int_{t}^{T}Z_{s}^{2}\dW_{s}.
\end{align*}
Here, $(U^1,V^1),(U^2,V^2)\in\bS_{0,T}^{2}\left(\R\right)\times\bH_{0,T}^{2}\left(\R^{d}\right)$ are given processes. Therefore $f_{s}\left(U_{s}^{1},V_{s}^{1},\Lambda_{s}\left(U_{s:T}^{1}\right)\right)$ and $f_{s}\left(U_{s}^{2},V_{s}^{2},\Lambda_{s}\left(U_{s:T}^{2}\right)\right)$ define processes in $\bH_{0,T}^{2}\left(\R\right)$ owing to the Assumptions (S) and (A). Therefore, the existence and uniqueness of   $(Y^1,Z^1)$ and $(Y^2,Z^2)$ in $\bS_{0, T}^{2}\left(\R\right)\times\bH_{0, T}^{2}\left(\R^{d}\right)$ as solutions of BSDEs is classical  (see \cite[Theorem 2.1, Proposition 2.2]{kpq1997}).

In addition, the process $Y^{1}-Y^{2}$ is then solution to the BSDE 
$
Y_{t}^{1}-Y_{t}^{2}=\int_{t}^{T}\delta_2 f_{s}\ds-\int_{t}^{T}\left(Z_{s}^{1}-Z_{s}^{2}\right)\dW_{s},
$
where the driver $\delta_2 f_{s}=f_{s}\left(U_{s}^{1},V_{s}^{1},\Lambda_{s}\left(U_{s:T}^{1}\right)\right)-f_{s}\left(U_{s}^{2},V_{s}^{2},\Lambda_{s}\left(U_{s:T}^{2}\right)\right)$
does not depend on $Y_{s}^{1}$ nor $Y_{s}^{2}$. Using Lemma \ref{lem:NEKestimates} for $C_f = 0$ and $\mu > 0$, we have that for $\beta > 0$ large enough, 
$$||\delta Y||^2_{\bS^2_{\beta,T}}+||\delta Z||^2_{\bH^2_{\beta,T}}\leq \frac{C}{\mu^2}||\delta_2 f||^2_{\bH^2_{\beta,T}}.$$
Moreover,  
\begin{align*}
||\delta_2 f||^2_{\bH^2_{\beta,T}} & =  \Esp{\int_{0}^{T}e^{\beta s}|f_{s}\left(U_{s}^{1},V_{s}^{1},\Lambda_{s}\left(U_{s:T}^{1}\right)\right)-f_{s}\left(U_{s}^{2},V_{s}^{2},\Lambda_{s}\left(U_{s:T}^{2}\right)\right)|^{2}\ds}\\
 & \leq  3C_{f}^2\Esp{\int_{0}^{T}e^{\beta s}\left(|\delta U_{s}|^{2}+|\delta V_{s}|^{2}+|\Lambda_{s}\left(U_{s:T}^{1}\right)-\Lambda_{s}\left(U_{s:T}^{2}\right)|^{2}\right)\ds}.
\end{align*}
As we have that $||\delta U||^2_{\bH^2_{\beta,T}}\leq T||\delta U||^2_{\bS^2_{\beta,T}}$, and
\begin{align*}
\Esp{\int_{0}^{T}e^{\beta s}|\Lambda_{s}\left(U_{s:T}^{1}\right)-\Lambda_{s}\left(U_{s:T}^{2}\right)|^{2}\ds} & \leq  C_{\Lambda}^2 \Esp{\int_{0}^{T}e^{\beta s}\int_{s}^{T}|\delta U_{u}|^{2}\nu_{s}\left(\du\right)\ds}
\leq T C_{\Lambda}^2 ||\delta U||^2_{\bS^2_{\beta,T}}.
\end{align*}
We obtain that 
\[
||\delta Y||_{\bS_{\beta,T}^{2}}^{2}+||\delta Z||_{\bH_{\beta,T}^{2}}^{2}\leq \frac{3CC_{f}^2}{\mu^2}\left(\left(T C_{\Lambda}^2+T\right)||\delta U||_{\bS_{\beta,T}^{2}}+||\delta V||_{\bH_{\beta,T}^{2}}^{2}\right).
\]
We now choose $\mu^2 > 3CC_{f}^2(T C^2_{\Lambda} + T + 1)$, and obtain that for $\beta$ large enough, the mapping
$\phi:\left(U,V\right)\rightarrow\left(Y,Z\right)$ is a contraction in the space ${\bS_{\beta,T}^{2}}(\R)\times {\bH_{\beta,T}^{2}}(\R^d)$. Hence, we get existence and uniqueness of a solution to the BSDE \eqref{eq:BSDE}. \qed

\section{The Case of CVaR initial margins}
\label{section:The Case of CVaR variation margin}
In this section, we apply the previous results on MKABSDE to a generalization of equation \eqref{eq:portfolio}.
We will consider a multi-dimensional setting (several assets $S^i$), a more general dynamic model and a more general terminal condition: we refer to the following section for precise definitions.
Beyond usual existence and uniqueness results, our aim is to analyse related approximations of the option price and delta, obtained when the CVaR is evaluated using Gaussian expansions (justified as $\Delta\to0$, see Theorem \ref{theorem: approximation:price}).

\subsection{Well-posedness of the problem}
Let us consider a general It\^o market with  $d$ tradable assets \cite[Chapter 1]{kara:shre:98}. The riskless asset $S^{0}$ (money account) follows
the dynamics $\frac{\dS_{t}^{0}}{S_{t}^{0}}=r_{t}\dt,$ and we have
$d$ risky assets $\left(S^{1},...,S^{d}\right)$ following 
\begin{equation}
\label{eq:ito market}
\frac{\dS_{t}^{i}}{S_{t}^{i}}=\mu_{t}^{i}\dt+\sum_{j=1}^{d}\sigma_{t}^{ij}\dW_{t}^{j},\ S_{0}^{i}=s_{0}^{i}\in\R,\ 1\leq i\leq d.
\end{equation}
The processes $r,\mu:=\left(\mu^{i}\right)_{1\leq i\leq d},\sigma:=\left(\sigma^{ij}\right)_{1\leq i,j\leq d}$
are $\cF$-adapted stochastic processes with values respectively in $\R,\R^{d},$
and the set of matrices of size $d\times d$. Moreover, we assume
that $\dt\otimes \dd\P$ a.e., the matrix $\sigma_{t}$ is invertible
and the processes $r$ and $\sigma^{-1}\left(\mu-r\1\right)$
are uniformly bounded, where we define the column vector $\1:=\left(1,...,1\right)^\top\in\R^{d}$. For a path-dependent payoff $\xi$ paid
at maturity $T$, the dynamics of the hedging portfolio $\left(V,\pi\right)$
with CVaR initial margin requirement (over a period $\Delta>0$) is given by
\begin{equation*}
V_{t}=\xi+\int_{t}^{T}\left(-r_{s}V_{s}+\pi_{s}\left(r_{s}\1-\mu_{s}\right)+R\CVaRaFs\left(V_{s}-V_{\left(s+\Delta\right)\wedge T}\right)\right)\ds-\int_{t}^{T}\pi_{s}\sigma_{s}\dW_{s}.
\end{equation*}
Here $\pi$ is a row vector whose $i$th coordinate consists of the amount invested in the $i$th asset. The derivation is analogous to that of Section \ref{subsection:A first Anticipative BSDE with dependence in law}.
This equation rewrites, in terms of the variables $(V,Z=\pi\sigma)$, 
\begin{equation}
V_{t}=\xi+\int_{t}^{T}\left(-r_{s}V_{s}+Z_s\sigma_{s}^{-1}\left(r_{s}\1-\mu_{s}\right)+R\CVaRaFs\left(V_{s}-V_{\left(s+\Delta\right)\wedge T}\right)\right)\ds-\int_{t}^{T}Z_{s}\dW_{s}.\label{eq:CvarBSDE}
\end{equation}
Existence and uniqueness for the MKABSDE above are a consequence of Theorem \ref{thm:MKABSDE}.
\begin{corollary} For any square integrable terminal condition $\xi$, 
the CVaR initial margin problem \eqref{eq:CvarBSDE} is well posed with a unique solution $(V,Z)\in {\bS_{\beta,T}^{2}}(\R)\times {\bH_{\beta,T}^{2}}(\R^d)$ for any $\beta \ge 0$.
\end{corollary}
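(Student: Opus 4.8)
The plan is to verify that equation \eqref{eq:CvarBSDE} is an instance of the abstract MKABSDE \eqref{eq:BSDE} for a suitable choice of the parameters $(f,\Lambda,\xi)$, and then simply invoke Theorem \ref{thm:MKABSDE}. Concretely, I would set
\[
f(s,y,z,\lambda) = -r_s\, y + z\,\sigma_s^{-1}\left(r_s\1 - \mu_s\right) + R\,\lambda,
\qquad
\Lambda_s\left(y_{s:T}\right) = \CVaRaFs\left(y_s - y_{(s+\Delta)\wedge T}\right),
\]
with terminal condition $\xi$. Then one must check that $f$ satisfies Assumption (S) and that $\Lambda$ satisfies Assumption (A).

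For Assumption (S): $f$ is affine in $(y,z,\lambda)$, so the Lipschitz property is immediate with a constant controlled by $\sup_s |r_s|$, $\sup_s |\sigma_s^{-1}(\mu_s - r_s\1)|$ (both finite by the boundedness hypotheses on $r$ and $\sigma^{-1}(\mu - r\1)$), and $R$; adaptedness of $s\mapsto f(s,y,z,\lambda)$ follows from adaptedness of $r,\mu,\sigma$. The integrability condition $\Esp{\int_0^T |f(s,0,0,0)|^2\ds} = 0 < \infty$ is trivial since $f(s,0,0,0)=0$.

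For Assumption (A): the key point is the contraction estimate for conditional CVaR. Using the Rockafellar–Uryasev representation \eqref{eq:cvar}, one shows the elementary inequality $\left|\CVaRaFs(L^1) - \CVaRaFs(L^2)\right| \le \frac{1}{1-\alpha}\,\Esp{|L^1 - L^2| \,\big|\, \cF_s}$, valid because the infimum over $x$ of a family of functions that are $\frac{1}{1-\alpha}$-Lipschitz (in the conditional-$L^1$ sense) in $L$ is itself $\frac{1}{1-\alpha}$-Lipschitz. Applying this with $L^i = y^i_s - y^i_{(s+\Delta)\wedge T}$ gives
\[
\left|\Lambda_s\left(y^1_{s:T}\right) - \Lambda_s\left(y^2_{s:T}\right)\right|
\le \frac{1}{1-\alpha}\,\Esp{\left|\delta y_s\right| + \left|\delta y_{(s+\Delta)\wedge T}\right| \,\Big|\, \cF_s},
\]
which is of the required form with $C_\Lambda = \frac{2}{1-\alpha}$ and the measure $\nu_s = \frac12\left(\delta_{\{s\}} + \delta_{\{(s+\Delta)\wedge T\}}\right)$, a probability measure supported in $[s,T]$. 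One also checks that $\left(\Lambda_s(X_{s:T})\right)_{s\in[0,T]} \in \bH^2_{0,T}(\R)$ for $X \in \bS^2_{0,T}(\R)$: indeed $|\Lambda_s(X_{s:T})| \le |\CVaRaFs(X_s - X_{(s+\Delta)\wedge T}) - \CVaRaFs(0)| + |\CVaRaFs(0)| \le \frac{1}{1-\alpha}\Esp{|X_s| + |X_{(s+\Delta)\wedge T}|\,|\,\cF_s}$ (using $\CVaRaFs(0)=0$), and the right-hand side is square-integrable on $[0,T]\times\Omega$ by Doob's inequality and $X\in\bS^2_{0,T}$; measurability of $s\mapsto\Lambda_s(X_{s:T})$ follows from a standard argument (e.g. continuity of $s\mapsto(s+\Delta)\wedge T$ together with path continuity of $X$ and properties of conditional expectations). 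With (S) and (A) verified, Theorem \ref{thm:MKABSDE} yields a unique solution $(V,Z)\in\bS^2_{0,T}(\R)\times\bH^2_{0,T}(\R^d)$, and since these spaces coincide with $\bS^2_{\beta,T}(\R)\times\bH^2_{\beta,T}(\R^d)$ for every $\beta\ge0$, the corollary follows; the recovery of $\pi = Z\sigma^{-1}$ is immediate from the invertibility of $\sigma_t$.

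The only mildly delicate step is the conditional CVaR Lipschitz estimate and the verification that it fits the structure of Assumption (A) (in particular identifying the right measure $\nu_s$ and the constant $C_\Lambda$); everything else is a routine translation of the boundedness hypotheses on the market coefficients into the Lipschitz and integrability requirements of Assumption (S).
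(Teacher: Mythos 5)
Your proposal is correct and follows essentially the same route as the paper: identify the driver and the anticipative functional, verify Assumption (S) from the boundedness of $r$ and $\sigma^{-1}(\mu-r\1)$, establish the conditional CVaR Lipschitz bound via the Rockafellar--Uryasev representation with the measure $\nu_s=\tfrac12(\delta_{s}+\delta_{(s+\Delta)\wedge T})$, and invoke Theorem \ref{thm:MKABSDE}. The only (immaterial) differences are that the paper absorbs the factor $R$ into $\Lambda$ rather than into the driver, and checks the $\bH^2_{0,T}$ membership of $\Lambda$ via a two-sided sandwich of the infimum rather than via the Lipschitz estimate around $0$.
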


\red{\noindent Note that, when $R = 0$, the coefficients $r_t, \mu_t$ and $\sigma_t$ are deterministic functions of time and $\xi = \Phi(S_T)$, \eqref{eq:CvarBSDE} is solved by $V_t = v(t,S_t)$, where $S$ follows \eqref{eq:ito market} and $v$ is the solution to the classical (multi-dimensional) Black-Scholes pricing equation with interest rate $r_t$, volatility parameter $\sigma_t$ and payoff function $\Phi$.
As usual, the function $v$ does not depend on the drift parameter $\mu$, while the portfolio process $V_t = v(t,S_t)$ (aka the MtM of the trade) does, via the asset price $S$.
In general, the solution $(V, Z)$ to \eqref{eq:CvarBSDE} will still depend on $\mu$, even if we cannot identify the analogous of the function $v$ in the general case.}

\begin{proof}
We check that Assumptions S and A are satisfied for the problem \eqref{eq:CvarBSDE}.
The driver of the BSDE has the form
\[
f\left(t,v,z,\lambda\right)=-r_{t}v+z\sigma_{t}^{-1}\left(r_{t}\1-\mu_{t}\right)+\lambda,\ t\geq0,\ v,\lambda\in\R,\ z\in\R^{d},
\]
and we introduce the functional 
\[
\Lambda_{t}\left(X_{t:T}\right):=R\CVaRaFt\left(X_{t}-X_{\left(t+\Delta\right)\wedge T}\right)=R\inf_{x\in\R}\Esp{\frac{\left(X_{t}-X_{\left(t+\Delta\right)\wedge T}-x\right)^{+}}{1-\alpha}+x\mid \cF_{t}},\ t\in[0,T],\ X\in\bS_{0,T}^{2}(\R).
\]
Since $r$ and $\sigma^{-1}\left(\mu-r\1\right)$
are uniformly bounded, $f$ clearly satisfies Assumption (S).  We now check
that $\Lambda$ satisfies Assumption (A). 
For $X\in\bS_{0,T}^{2}(\R)$ and $x\in\mathbb{R},$ we have  
\begin{equation}
\label{eq:ineqLambda}
\mathbb{E}\left[\left.X_t - X_{(t+\Delta)\wedge T} \right| \mathcal{F}_t\right] \leq \inf_{x\in\mathbb{R}} \mathbb{E}\left[\left.\frac{(X_t - X_{(t+\Delta)\wedge T} - x)^+}{1 - \alpha} + x \right| \mathcal{F}_t\right] \leq \mathbb{E}\left[\left.\frac{(X_t - X_{(t+\Delta)\wedge T})^+}{1 - \alpha} \right| \mathcal{F}_t\right],
\end{equation}
where for the left hand side (l.h.s.)\til we use the fact that $\frac{(z-x)^+}{1-\alpha} + x \geq z$ for $\alpha\in(0,1)$ and $z,x\in\mathbb{R}$,  and for the right hand side (r.h.s.), we bound the infimum from above with the value taken at $x=0$.
Since it is easy to check that both the l.h.s.\til and the r.h.s.\til of \eqref{eq:ineqLambda} belong to $\mathbb{H}^2_{0,T}$, we conclude that $\Lambda_{t}\left(X\right)\in\bH_{0,T}^{2}(\R)$. Now, let $X^{1},X^{2}\in\bS_{0,T}^{2}$.
We have
\begin{align*}
|\Lambda_{t}\left(X^{1}\right)-\Lambda_{t}\left(X^{2}\right)|
& \leq 
R \, \Esp{\left|\frac{X_{t}^{1}-X_{t}^{2}-\left(X_{\left(t+\Delta\right)\wedge T}^{1}-X_{\left(t+\Delta\right)\wedge T}^{2}\right)}{1-\alpha}\right|\mid \cF_{t}}\\
&\leq\frac{2 R}{1-\alpha}\Esp{\int_{t}^{T}\left|X_{s}^{1}-X_{s}^{2}\right|\nu_{t}\left(\ds\right)\mid \cF_{t}},
\end{align*}
where, for the first inequality, we use the fact that $\left|\inf_{x\in\R}g^{1}(x)-\inf_{x\in\R}g^{2}(x)\right|\leq\sup_{x\in\R}\left|g^{1}(x)-g^{2}(x)\right|$
for any couple of functions $g^{1},g^{2}:\R\rightarrow\R$ and
the 1-Lipschitz property of the positive part function, and for the second inequality, \red{we have set $\nu_{t} := \frac 12 (\delta_{t} + \delta_{\left(t+\Delta\right)\wedge T})$, where $\delta_{u}$ is the Dirac measure on $\{u\}$.
We conclude that Assumption (A) holds with $C_{\Lambda}=\frac{2 R}{1-\alpha}$.} 
We finally apply Theorem \ref{thm:MKABSDE} to obtain the claim.
\end{proof}

\subsection{Approximation by standard BSDEs when $\Delta\ll 1$} \label{sec:approximationsBSDE}
The numerical solution of \eqref{eq:CvarBSDE} is challenging in full generality.
In fact, it is a priori more difficult than solving a standard BSDE, for which we can employ, for example, regression Monte-Carlo methods (see e.g.~\cite{gobe:turk:14} and references therein).
In this work, we take advantage of the fact that $\Delta$ is small (recall $\Delta=$ one week or 10 days) in order to provide handier approximations of $(V,Z)$ defined in terms of standard non-linear or linear BSDEs.
Below, we define these different BSDEs and  provide the related error estimates.

At the lowest order in the parameter $\sqrt{\Delta}$, formally we have that, conditionally to $\cF_{s}$,
\begin{equation*}
V_{s}-V_{\left(s+\Delta\right)\wedge T}\approx-\int_{s}^{\left(s+\Delta\right)\wedge T}Z_u\dW_{u}\overset{(d)}{=}-|Z_{s}|\sqrt{\left(s+\Delta\right)\wedge T-s}\times G,
\end{equation*}
where we freeze the process $Z$ at current time $s\in[0,T]$, and $G\overset{(d)}{=}\mathcal{N}\left(0,1\right)$ is independent from $\cF_{s}$.
This is an approximation of CVaR using the ``Delta'' portion of the portfolio (in the spirit of \cite[Section 2]{glas:heid:shah:00}). Plugging this approximation into (\ref{eq:CvarBSDE}),
and defining 
\begin{equation}
\label{eq:CVaR:gaussien}
C_{\alpha}:=\CVaRa\left(\mathcal{N}\left(0,1\right)\right)=\left.\frac{e^{-x^2/2}  }{ (1-\alpha)\sqrt {2\pi}}\right|_{x=\cN^{-1}(\alpha)},
\end{equation}
we obtain a standard non-linear BSDE 
\begin{align}
V_{t}^{NL}  =  \xi+\int_{t}^{T}\left(-r_{s}V_{s}^{NL}+Z_{s}^{NL}\sigma_{s}^{-1}\left(r_{s}\1-\mu_{s}\right)
+
R \, C_{\alpha}\sqrt{\left(s+\Delta\right)\wedge T - s}|Z_{s}^{NL}|\right)\ds-\int_{t}^{T}Z_{s}^{NL}\dW_{s}.\quad \label{eq:NLBSDE}
\end{align}
Seeing $V^{NL}$ as a function of the parameter $\Delta$ appearing in the driver, we follow the expansion procedure in \cite{GobetPagliarani} and perform an expansion at the orders $0$ and $1$ w.r.t.~$\sqrt \Delta$.
We obtain two linear BSDEs, respectively $(V^{BS},Z^{BS})$ and $(V^{L},Z^L)$, where
\begin{align}
V_{t}^{BS} & =  \xi+\int_{t}^{T}\left(-r_{s}V_{s}^{BS}+Z_{s}^{BS}\sigma_{s}^{-1}\left(r_{s}\1-\mu_{s}\right)\right)\ds-\int_{t}^{T}Z_{s}^{BS}\dW_{s},\label{eq:BSBSDE}\\
V_{t}^{L} & =  \xi+\int_{t}^{T}\left(-r_{s}V_{s}^{L}+Z_{s}^{L}\sigma_{s}^{-1}\left(r_{s}\1-\mu_{s}\right)+ R C_{\alpha}\sqrt{\left(s+\Delta\right)\wedge T - s}|Z_{s}^{BS}|\right)\ds-\int_{t}^{T}Z_{s}^{L}\dW_{s}.\label{eq:LBSDE}
\end{align}
Let us comment on these different models:
\begin{itemize}
\item The simplest equation is $(V^{BS},Z^{BS})$, corresponding to the usual linear valuation rule \cite[Theorem 1.1]{kpq1997} without IM requirement. When the model is a one-dimensional geometric Brownian motion and $\xi=(S_T-K)_+$, the solution is given by the usual Black-Scholes formula.
\item The second simplest equation is $(V^{L},Z^{L})$ where the IM cost is computed using the ``Delta'' of an exogenous reference price given by the simpler pricing rule $(V^{BS},Z^{BS})$ without IM.
This is still a linear BSDE, but the numerical simulation requires to know $Z^{BS}$ in order to evaluate $(V^{L},Z^{L})$. We will use a nested Monte-Carlo procedure in our experiments.
\item The third equation is \eqref{eq:NLBSDE}, where the IM cost is computed using the ``Delta'' of the endogenous price $(V^{NL},Z^{NL})$ itself.
\end{itemize}

Existence and uniqueness of a solution to the BSDEs (\ref{eq:NLBSDE}), (\ref{eq:BSBSDE})
and (\ref{eq:LBSDE}) are direct consequences of \cite{pp1990}, as the
respective drivers satisfy standard Lipschitz properties and the processes $r$ and $\sigma^{-1}(r\1-\mu)$ are bounded.
\begin{proposition}\label{prop:BSDE:standard}
The BSDEs (\ref{eq:NLBSDE}), (\ref{eq:BSBSDE}) and (\ref{eq:LBSDE}) have a unique solution in the $L_2$-space $\bS_{0,T}^{2}\times 
\bH_{0,T}^{2}$, whose norm is uniformly bounded in $\Delta\leq T$.
\end{proposition}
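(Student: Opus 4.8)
The statement has two components: existence and uniqueness, which is essentially quoted from \cite{pp1990} (see also \cite[Theorem 2.1]{kpq1997}), and the bound that is uniform in $\Delta\le T$, which is the only point requiring attention. My plan is to run, for each of the three equations, the classical Lipschitz-BSDE fixed point together with the standard $L^2$ a priori estimate, while tracking precisely how $\Delta$ enters the constants.

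The key elementary observation is that $0\le (s+\Delta)\wedge T-s\le\Delta\le T$ for every $s\in[0,T]$ and $\Delta\in(0,T]$, so every occurrence of $\Delta$ in the generators is shielded by $\sqrt T$. First I would use this to check that the generator of \eqref{eq:NLBSDE} is Lipschitz in $(v,z)$ with constant at most $\|r\|_\infty+\|\sigma^{-1}(\mu-r\1)\|_\infty+R\,C_\alpha\sqrt T$, that \eqref{eq:LBSDE} is linear in $(v,z)$ with coefficients controlled by the same quantities, and that \eqref{eq:BSBSDE} does not involve $\Delta$ at all; all of these bounds are finite and $\Delta$-independent because $r$ and $\sigma^{-1}(\mu-r\1)$ are uniformly bounded by assumption. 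Together with $\xi\in L^2_T(\R)$, the facts that the generators of \eqref{eq:NLBSDE} and \eqref{eq:BSBSDE} vanish at $(v,z)=(0,0)$ and that the generator of \eqref{eq:LBSDE} evaluated at $(0,0)$ equals $R\,C_\alpha\sqrt{(s+\Delta)\wedge T-s}\,|Z^{BS}_s|$, which lies in $\bH^2_{0,T}(\R)$ as soon as $Z^{BS}\in\bH^2_{0,T}(\R)$, the Pardoux--Peng theorem yields a unique solution in $\bS^2_{0,T}\times\bH^2_{0,T}$ for each equation.

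For the norm bound I would invoke the standard a priori estimate (the same machinery underlying Lemma \ref{lem:NEKestimates}, specialized to the zero anticipating functional and compared with the null solution): a Lipschitz BSDE with data $\xi$ and generator $g$ obeys $\|Y\|^2_{\bS^2_{0,T}}+\|Z\|^2_{\bH^2_{0,T}}\le C\big(\Esp{|\xi|^2}+\Esp{\int_0^T|g(s,0,0)|^2\ds}\big)$ with $C$ depending only on $T$ and the Lipschitz constant of $g$, hence fixable independently of $\Delta\le T$ by the previous step. For \eqref{eq:BSBSDE} and \eqref{eq:NLBSDE} the source term $g(\cdot,0,0)$ vanishes, so the bound reduces to $C\,\Esp{|\xi|^2}$, uniform in $\Delta$; in particular $\|Z^{BS}\|_{\bH^2_{0,T}}$ is bounded independently of $\Delta$. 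For \eqref{eq:LBSDE}, the same estimate gives $\|V^L\|^2_{\bS^2_{0,T}}+\|Z^L\|^2_{\bH^2_{0,T}}\le C\big(\Esp{|\xi|^2}+R^2C_\alpha^2\,T\,\|Z^{BS}\|^2_{\bH^2_{0,T}}\big)$ after bounding $\Esp{\int_0^T|R\,C_\alpha\sqrt{(s+\Delta)\wedge T-s}\,Z^{BS}_s|^2\ds}$ by $R^2C_\alpha^2\,T\,\|Z^{BS}\|^2_{\bH^2_{0,T}}$, and substituting the previous $\Delta$-free bound on $\|Z^{BS}\|_{\bH^2_{0,T}}$ closes the estimate.

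I do not expect a genuine obstacle here: the argument is essentially constant-chasing, and the only mild subtlety is that \eqref{eq:LBSDE} is not self-contained — it must be treated after \eqref{eq:BSBSDE}, inheriting its uniform-in-$\Delta$ bound from the (trivially $\Delta$-free) control of $Z^{BS}$, whereas for \eqref{eq:NLBSDE} and \eqref{eq:BSBSDE} the bound follows in one shot.
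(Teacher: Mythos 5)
Your proposal is correct and follows essentially the same route as the paper, which simply invokes \cite{pp1990} after observing that the drivers are standard Lipschitz with bounded coefficients $r$ and $\sigma^{-1}(r\1-\mu)$; your version merely makes explicit the constant-tracking via $0\le (s+\Delta)\wedge T-s\le T$ and the need to treat \eqref{eq:BSBSDE} before \eqref{eq:LBSDE}, both of which are implicit in the paper's one-line justification.
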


The main result of this section is the following theorem.
\begin{thm}\label{theorem: approximation:price}
Define the $L_2$ time-regularity index of $Z^{NL}$ by
\begin{equation}
\label{eq:index:regularity:1}
{\cal E}^{NL}(\Delta):=\frac{ 1 }{ \Delta}\Esp{\int_{0}^{T}\int_{t}^{\left(t+\Delta\right)\wedge T}\left|Z_{s}^{NL}-Z_{t}^{NL}\right|^{2}\ds\dt}.
\end{equation}
We always have $\sup_{0<\Delta\leq T}{\cal E}^{NL}(\Delta)<+\infty$. Moreover,
there exist constants $K_{1},K_{2},K_{3}>0$, independent from $\Delta$,
such that
\begin{align}
||V^{L}-V^{BS}||_{\bS_{0,T}^{2}}^{2}+||Z^{L}-Z^{BS}||_{\bH_{0,T}^{2}}^{2} & \leq  K_{1}\Delta,
\label{eq:L-BS}
\\
||V^{NL}-V^{L}||_{\bS_{0,T}^{2}}^{2}+||Z^{NL}-Z^{L}||_{\bH_{0,T}^{2}}^{2} & \leq  K_{2}\Delta^{2}, 
\label{eq:NL-L}
\\
||V-V^{NL}||_{\bS_{0,T}^{2}}^{2}+||Z-Z^{NL}||_{\bH_{0,T}^{2}}^{2} & \leq  K_{3}\Delta\left(\Delta+ {\cal E}^{NL}(\Delta)\right).\label{eq:MV-NL}
\end{align}
In addition, we have 
\begin{equation}
\label{eq:estimate:fine:ENL}
{\cal E}^{NL}(\Delta)=O(\Delta),
\end{equation}
and thus $||V-V^{NL}||_{\bS_{0,T}^{2}}^{2}+||Z-Z^{NL}||_{\bH_{0,T}^{2}}^{2}=O(\Delta^2)$ provided that the additional sufficient conditions below are fulfilled:
\begin{enumerate}
\item [(i)] the terminal condition is a Lipschitz functional of $S$, that is,
$\xi=\Phi(S_{0:T})$ for some functional $\Phi$ satisfying 
$$|\Phi(x_{0:T})-\Phi(x'_{0:T})|\leq C_\Phi \sup_{t\in [0,T]}|x_t-x'_t|,$$
for any continuous paths $x,x':[0,T]\to\R^d$;
\item [(ii)] the coefficients $r,\sigma,\mu$ are constant.
\end{enumerate}
\end{thm}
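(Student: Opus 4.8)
The proof proceeds by treating each estimate as a stability result for standard Lipschitz BSDEs, chaining together the classical a priori estimates (as in \cite{kpq1997}) with careful control of the perturbation terms in the drivers. The crucial preliminary is the uniform bound $\sup_{0<\Delta\le T}\mathcal{E}^{NL}(\Delta)<\infty$: this follows from the fact that $Z^{NL}\in\mathbb{H}^2_{0,T}(\R^d)$ with norm uniformly bounded in $\Delta$ (Proposition \ref{prop:BSDE:standard}), which gives $\frac1\Delta\Esp{\int_0^T\int_t^{(t+\Delta)\wedge T}(|Z^{NL}_s|^2+|Z^{NL}_t|^2)\ds\dt}\le 2\|Z^{NL}\|^2_{\mathbb{H}^2_{0,T}}$ by Fubini and $\int_t^{(t+\Delta)\wedge T}\ds\le\Delta$. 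So $\mathcal{E}^{NL}(\Delta)$ is bounded even without any path regularity.

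For \eqref{eq:L-BS}, I would subtract the BSDEs \eqref{eq:LBSDE} and \eqref{eq:BSBSDE}: the difference $(V^L-V^{BS},Z^L-Z^{BS})$ solves a linear BSDE with zero terminal condition and an extra source term $RC_\alpha\sqrt{(s+\Delta)\wedge T-s}\,|Z^{BS}_s|$. The standard a priori estimate bounds the $\mathbb{S}^2\times\mathbb{H}^2$ norm of the difference by (a constant times) $\Esp{\int_0^T R^2C_\alpha^2((s+\Delta)\wedge T-s)|Z^{BS}_s|^2\ds}\le R^2C_\alpha^2\Delta\,\|Z^{BS}\|^2_{\mathbb{H}^2_{0,T}}$, and since $\|Z^{BS}\|_{\mathbb{H}^2_{0,T}}$ is uniformly bounded in $\Delta$, this is $K_1\Delta$. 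For \eqref{eq:NL-L}, I subtract \eqref{eq:NLBSDE} and \eqref{eq:LBSDE}; now the difference $(V^{NL}-V^L, Z^{NL}-Z^L)$ solves a BSDE whose driver difference is $RC_\alpha\sqrt{(s+\Delta)\wedge T-s}\,(|Z^{NL}_s|-|Z^{BS}_s|)$. Using $||Z^{NL}_s|-|Z^{BS}_s||\le|Z^{NL}_s-Z^{BS}_s|\le|Z^{NL}_s-Z^L_s|+|Z^L_s-Z^{BS}_s|$, the a priori estimate produces a term proportional to $\Delta\|Z^{NL}-Z^L\|^2_{\mathbb{H}^2}$ (absorbed into the left side for $\Delta$ small, or handled via the $\beta$-weighted norm as in Lemma \ref{lem:NEKestimates}) plus $\Delta\|Z^L-Z^{BS}\|^2_{\mathbb{H}^2}\le\Delta\cdot K_1\Delta=K_1\Delta^2$ by the previous step, giving $K_2\Delta^2$.

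For \eqref{eq:MV-NL}, the hardest of the three, I compare \eqref{eq:CvarBSDE} with \eqref{eq:NLBSDE}. The driver difference at time $s$ is $R\big(\CVaRaFs(V_s-V_{(s+\Delta)\wedge T})-C_\alpha\sqrt{(s+\Delta)\wedge T-s}\,|Z^{NL}_s|\big)$. I would split this into two pieces: (a) replace $V_s-V_{(s+\Delta)\wedge T}$ by $-\int_s^{(s+\Delta)\wedge T}Z^{NL}_u\dW_u$ — controlled by the difference of the drift integrals over $[s,(s+\Delta)\wedge T]$ and by $Z-Z^{NL}$ — and then by its "frozen" version $-\int_s^{(s+\Delta)\wedge T}Z^{NL}_s\dW_u$, the latter replacement costing exactly $\mathcal{E}^{NL}(\Delta)$ after integrating in $s$, using the Lipschitz property of conditional CVaR (already established: $|\CVaRaFs(L_1)-\CVaRaFs(L_2)|\le\frac1{1-\alpha}\Esp{|L_1-L_2|\mid\cF_s}$); and (b) observe that $\CVaRaFs(-\int_s^{(s+\Delta)\wedge T}Z^{NL}_s\dW_u)=C_\alpha\sqrt{(s+\Delta)\wedge T-s}\,|Z^{NL}_s|$ exactly, by the scaling of CVaR under the (conditionally) Gaussian law with conditional variance $|Z^{NL}_s|^2((s+\Delta)\wedge T-s)$ and the definition \eqref{eq:CVaR:gaussien} — this is the term that vanishes identically. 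Feeding the resulting $L^2$ bound on the driver difference, of size $O(\Delta(\Delta+\mathcal{E}^{NL}(\Delta)))$, into the a priori estimate yields \eqref{eq:MV-NL}. The main obstacle is making step (a) rigorous: one must control $\frac1\Delta\Esp{\int_0^T|\int_t^{(t+\Delta)\wedge T}(\text{drift}_u-\text{drift}_t)\du|^2\dt}$-type remainders and the conditional-expectation estimates uniformly, which requires the uniform $\mathbb{H}^2$ bounds from Proposition \ref{prop:BSDE:standard} and some bookkeeping with the $\beta$-weighted norms to absorb the $\Delta\|Z-Z^{NL}\|^2$ feedback term.

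Finally, \eqref{eq:estimate:fine:ENL} upgrades $\mathcal{E}^{NL}(\Delta)$ from $O(1)$ to $O(\Delta)$ under (i)--(ii). Under these assumptions the BSDE \eqref{eq:NLBSDE} is Markovian with constant coefficients and Lipschitz terminal data, so $Z^{NL}_t=z^{NL}(t,S_t)$ admits the standard $L^2$ time-regularity estimate $\Esp{\int_0^T|Z^{NL}_s-Z^{NL}_t|^2\ds}\le C|t'-t|$ over intervals $[t,t']$ — this is the classical path-regularity result for BSDEs with Lipschitz data (Zhang's regularity theorem), applicable because the driver here is globally Lipschitz in $(v,z)$ uniformly in $\Delta\le T$. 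Integrating $\Esp{\int_t^{(t+\Delta)\wedge T}|Z^{NL}_s-Z^{NL}_t|^2\ds}\le C\Delta$ over $t\in[0,T]$ and dividing by $\Delta$ gives $\mathcal{E}^{NL}(\Delta)\le CT=O(\Delta)$... wait, that gives $O(1)$; the correct argument integrates the local $L^2$-modulus $\Esp{\int_t^{(t+\Delta)\wedge T}|Z_s-Z_t|^2\ds}$, which is itself $O(\Delta)\cdot O(\Delta)=O(\Delta^2)$ pointwise in $t$ when one uses that the modulus grows linearly, so the double integral is $O(\Delta^2)$ and division by $\Delta$ gives $O(\Delta)$; the uniform-in-$\Delta$ constant comes from Proposition \ref{prop:BSDE:standard}. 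Combining \eqref{eq:estimate:fine:ENL} with \eqref{eq:MV-NL} gives the stated $O(\Delta^2)$ for $\|V-V^{NL}\|^2+\|Z-Z^{NL}\|^2$.
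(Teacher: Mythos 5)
Your treatment of the uniform bound on ${\cal E}^{NL}(\Delta)$, of \eqref{eq:L-BS}, and of \eqref{eq:MV-NL} matches the paper's: the same Fubini argument, the same source-term estimate for the difference of linear BSDEs, and for \eqref{eq:MV-NL} the same three ingredients (the Lipschitz bound $|\CVaRaFs(A)-\CVaRaFs(B)|\le \frac{1}{1-\alpha}\Esp{|A-B|\mid \cF_s}$, the exact Gaussian scaling identity for the frozen stochastic integral, and the split into a drift remainder of size $O(\Delta^2)$ and a martingale remainder of size $\Delta\,{\cal E}^{NL}(\Delta)$). The paper sidesteps your $\|Z-Z^{NL}\|$ feedback term by evaluating the perturbation at $(V^{NL},Z^{NL})$ in the stability lemma, but your version is absorbable as you indicate. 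For \eqref{eq:NL-L} you take a genuinely different and more self-contained route: the paper invokes the expansion result of \cite[Theorem 2.4]{GobetPagliarani}, whereas you chain two stability estimates; this works, since the residual source term $RC_\alpha\sqrt{(s+\Delta)\wedge T-s}\,\bigl(|Z^L_s|-|Z^{BS}_s|\bigr)$ has squared $\bH^2$-norm at most $(RC_\alpha)^2\Delta\cdot K_1\Delta$ by \eqref{eq:L-BS}, while the Lipschitz-in-$z$ part of the nonlinear driver is handled by the $\beta$-weighted norms.

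The genuine gap is in your proof of \eqref{eq:estimate:fine:ENL}. You assert that under (i)--(ii) the local modulus $\Esp{\int_t^{(t+\Delta)\wedge T}|Z^{NL}_s-Z^{NL}_t|^2\ds}$ is $O(\Delta^2)$ \emph{pointwise in $t$}. Zhang's path-regularity theorem \cite[Theorem 3.1]{zhang2004} does not give this: it only controls the \emph{sum} over a partition $\{t_i\}$ of mesh $\Delta$, namely $\sum_i\Esp{\int_{t_i}^{t_{i+1}}\bigl(|Z_s-Z_{t_i}|^2+|Z_s-Z_{t_{i+1}}|^2\bigr)\ds}\le C\Delta$, and for a merely Lipschitz payoff the pointwise claim is false near maturity (for a call option $\Esp{|Z_t-Z_T|^2}$ is of order $\sqrt{T-t}$, so the last cell contributes of order $\Delta^{3/2}$, not $\Delta^2$). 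What is missing is the bridge from the continuum double integral defining ${\cal E}^{NL}(\Delta)$ to the grid-indexed quantity that Zhang's theorem controls; the paper supplies it with inequality \eqref{eq:decoupe}, obtained from the triangle inequality $|Z_s-Z_t|^2\le 2|Z_s-Z_{t_{i+1}}|^2+2|Z_t-Z_{t_{i+1}}|^2$ for $t\in[t_i,t_{i+1}]$ combined with the deterministic bound \eqref{eq:det:inequality}. With that reduction one gets ${\cal E}^{NL}(\Delta)\le 4\sum_i\Esp{\int_{t_i}^{t_{i+1}}\bigl(|Z^{NL}_s-Z^{NL}_{t_i}|^2+|Z^{NL}_s-Z^{NL}_{t_{i+1}}|^2\bigr)\ds}=O(\Delta)$, and the rest of your argument goes through.
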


\noindent Estimate \eqref{eq:estimate:fine:ENL} follows from \cite{zhang2004}. 
Let us remark that the tools used in \cite{zhang2004} to prove  \eqref{eq:estimate:fine:ENL}, and consequently the estimate \eqref{eq:estimate:fine:ENL} itself, should also hold if we assume (i) and the following more general conditions:
\begin{enumerate}
\item [(iii)] the processes
$r,\sigma,\mu$ are Markovian, i.e.\til$r_{t}=\hat{r}(t,S_{t})$,
$\sigma^{ij}_{t}=\hat \sigma^{ij}(t,S_{t})$ and
$\mu^i_{t}=\hat{\mu}^i(t,S_{t})$ for some deterministic functions 
$\hat{r},\hat \mu^i, \hat \sigma^{ij}$; 
\item [(iv)] the functions $x \to \hat{\mu }^i(x) x_i$, $x\to \hat{\sigma}^{ij}(x)x_i$ are globally Lipschitz in $(t,x)\in [0,T]\times \R^d$, for any $1\leq i ,j\leq d$;
\item [(v)] the functions $\hat{r}$ and $\hat \sigma^{-1}\left(\hat r\1-\hat \mu\right)$ are globally Lipschitz in $(t,x)\in[0,T]\times \R^d$.
\end{enumerate}

\noindent
\begin{remark}
When ${\cal E}^{NL}(\Delta)=O(\Delta)$, then $(V,Z)$ and $(V^{NL},Z^{NL})$ are close to each other: precisely, the squared error is of order $\Delta^2$.
Overall, the error estimates \eqref{eq:L-BS}-\eqref{eq:NL-L}-\eqref{eq:MV-NL} show that, on the one hand, there is a significative difference between valuation with or without initial margin cost: the size of the corrections $||V^{L}-V^{BS}||_{\bS_{0,T}^{2}}$ and $||Z^{L}-Z^{BS}||_{\bH_{0,T}^{2}}$ is of order $\sqrt \Delta$ (in \eqref{eq:L-BS}).
On the other hand, the other valuation rules yield comparable values when $\Delta\ll 1$, since the other corrections (\eqref{eq:NL-L} and \eqref{eq:MV-NL}) have size of order $\Delta$. 
\end{remark}

\subsection{Proof of Theorem \ref{theorem: approximation:price}}
\noindent \emph{$\rhd$ Estimate on ${\cal E}^{NL}(\Delta)$.} We start with a deterministic inequality. For any positive function $\Psi$ and any $\beta\geq 0$, we have
\begin{align}
\label{eq:det:inequality}
\int_0^T e^{\beta t}\left(\int_t^{(t+\Delta)\land T} \Psi_s \ds \right)\dt \leq 
\Delta \int_0^T e^{\beta s} \Psi_s \ds.
\end{align}
Indeed the left hand side of \eqref{eq:det:inequality} can be written as 
\begin{equation}
\label{eq:det:inequality:2}
\int_0^T\int_0^T e^{\beta t}\Psi_s\1_{t\leq s \leq (t+\Delta)\land T}\ds \dt
=\int_0^T\Psi_s \left(\int_0^T e^{\beta t}\1_{t\leq s \leq (t+\Delta)\land T}\dt\right) \ds,
\end{equation}
from which \eqref{eq:det:inequality} follows easily.
Using $(a+b)^2\leq 2 a^2+2b^2$ and \eqref{eq:det:inequality} with $\beta=0$ gives 
\begin{equation}
\label{eq:index:regularity:2}
{\cal E}^{NL}(\Delta)\leq
\frac{ 2 }{ \Delta}\Esp{\int_{0}^{T}\int_{t}^{\left(t+\Delta\right)\wedge T}
(|Z_{s}^{NL}|^{2}+|Z_{t}^{NL}|^{2})\ds\dt}
\leq 4 \, \Esp{\int_{0}^{T}\left|Z_{t}^{NL}\right|^{2} \dt},
\end{equation}
and the last term is uniformly bounded in $\Delta$ (Proposition \ref{prop:BSDE:standard}).

We now derive finer estimates that reveal the $L_2$ time-regularity of $Z^{NL}$ under the additional assumptions (i)-(ii). In this Markovian setting, we know that $Z^{NL}$ has a c\`adl\`ag version (see \cite[Remark (ii) after Lemma 2.5]{zhang2004}).
Then, introduce the time grid $t_{i}=i\Delta$ for $0\leq i\leq n:=\lfloor\frac{T}{\Delta}\rfloor$ and $t_{n+1}=T$. We claim that 
\begin{equation}
{\cal E}^{NL}(\Delta)\leq 4 \sum_{i=0}^{n}\Esp{\int_{t_{i}}^{t_{i+1}}\left|Z_{s}^{NL}-Z_{t_{i}}^{NL}\right|^{2}+\left|Z_{s}^{NL}-Z_{t_{i+1}}^{NL}\right|^{2}\ds}.\label{eq:decoupe}
\end{equation}
With this result at hand, the sestimate \eqref{eq:estimate:fine:ENL} follows directly from an application of \cite[Theorem 3.1]{zhang2004}. 
In sorder to show \eqref{eq:decoupe}, denote $\varphi^-(s)$ and $\varphi^+(s)$ the points on the grid before and after $s$, and write
\begin{align*}
\int_{0}^{T}\int_{t}^{\left(t+\Delta\right)\wedge T}|Z^{NL}_{s}&-Z^{NL}_{t}|^{2}\ds\dt
\leq 2\sum_{i=0}^{n}\int_{t_{i}}^{t_{i+1}}\int_{t}^{(t+\Delta)\land T}(
|Z^{NL}_{s}-Z^{NL}_{t_{i+1}}|^2+|Z^{NL}_{t}-Z^{NL}_{t_{i+1}}|^2)\ds\dt\\
&\leq 2\int_{0}^{T}\int_{t}^{\left(t+\Delta\right)\wedge T}\left(|Z^{NL}_{s}-Z^{NL}_{\varphi^-(s)}|^{2}+|Z^{NL}_{s}-Z^{NL}_{\varphi^+(s)}|^{2}\right)\ds\dt+2\Delta \int_{0}^{T}|Z^{NL}_{t}-Z^{NL}_{\varphi^+(t)}|^{2}\dt\\
&\leq 4\Delta\left( \int_{0}^{T}|Z^{NL}_{t}-Z^{NL}_{\varphi^-(t)}|^{2}\dt
+\int_{0}^{T}|Z^{NL}_{t}-Z^{NL}_{\varphi^+(t)}|^{2}\dt\right)
\end{align*}
where we have used \eqref{eq:det:inequality} with $\beta=0$.
The inequality \eqref{eq:decoupe} follows. 

\noindent \emph{$\rhd$ Proof of \eqref{eq:L-BS}.}
This  error estimate is related to the difference of two
linear BSDEs. The drivers of $V^{BS}$ and $V^{L}$ are respectively
$f^{BS}\left(s,y,z,\lambda\right)=-r_sy+z\sigma_{s}^{-1}\left(r_{s}\1-\mu_{s}\right)$,
and $$f^{L}\left(s,y,z,\lambda\right)=-r_sy+z\sigma_{s}^{-1}\left(r_{s}\1-\mu_{s}\right)+R C_{\alpha}\sqrt{\left(s+\Delta\right)\wedge T - s}|Z_{s}^{BS}|,$$
for $s\in [0,T]$, $v,\lambda \in\R$ and $z\in\R^{d}$,
hence
\[
\left(f^{L}-f^{BS}\right)\left(s,y,z,\lambda\right)=R C_{\alpha}\sqrt{(s+\Delta)\wedge T - s}|Z_{s}^{BS}|.
\]
By Lemma \ref{lem:NEKestimates}, we obtain that for $\mu>0$, $\beta$ large
enough and $K_{1}=\frac{C}{\mu^{2}}\left(R C_{\alpha}\right)^{2}||Z^{BS}||_{\bH_{\beta,T}^{2}}$,
\begin{eqnarray*}
||V^{L}-V^{BS}||_{\bS_{0,T}^{2}}^{2}+||Z^{L}-Z^{BS}||_{\bH_{0,T}^{2}}^{2} & \leq & ||V^{L}-V^{BS}||_{\bS_{\beta,T}^{2}}^{2}+||Z^{L}-Z^{BS}||_{\bH_{\beta,T}^{2}}^{2}\leq K_{1}\Delta.
\end{eqnarray*}
We are done with \eqref{eq:L-BS}.\medskip

\noindent \emph{$\rhd$ Proof of \eqref{eq:NL-L}.}
Then, as $\xi\in L^{2}$, as the processes $r,\sigma^{-1}\left(\mu-r\1\right)$
are bounded and as the non-linear term 
\[
t,z\in[0,T]\times\R^d\rightarrow \R C_{\alpha}\sqrt{\left(t+\Delta\right)\wedge T - t}|z|
\]
is Lipschitz in the variable $z$, uniformly in time, we obtain Inequality
(\ref{eq:NL-L}) as an application of \cite[Theorem 2.4]{GobetPagliarani}, for which assumptions $\mathbf{H.1-H.3}$ are satisfied.\medskip

\noindent \emph{$\rhd$ Proof of \eqref{eq:MV-NL}.}
Using computations similar
to those in the proof of Lemma \ref{lem:NEKestimates}, we obtain existence of a constant $C>0$ such that for $\mu>0$ and $\beta$ large enough,
\[
||V-V^{NL}||_{\bS_{\beta,T}^{2}}^{2}+||Z-Z^{NL}||_{\bH_{\beta,T}^{2}}^{2}\leq\frac{C}{\mu^2}\left|\left|\CVaRa_{\cF_{\cdot}}\left(V^{NL}_{\cdot}-V^{NL}_{\left(\cdot+\Delta\right)\wedge T}\right)-\CVaRa_{\cF_{\cdot}}\left(-\int_{\cdot}^{\left(\cdot+\Delta\right)\wedge T}Z_{\cdot}^{NL}\dW_{s}\right)\right|\right|_{\bH_{\beta,T}^{2}}^{2}.
\]
As the CVaR function is subadditive \cite{rock:urya:00}, we have that given $A,B$ two random variables,
and $t\in[0,T]$, $\CVaRaFt\left(A\right)\leq\CVaRaFt\left(B\right)+\CVaRaFt\left(A-B\right)$.
Inverting the roles of $A$ and $B$, we obtain that 
\begin{eqnarray*}
0\leq\left|\CVaRaFt\left(A\right)-\CVaRaFt\left(B\right)\right| & \leq & \max\left(\CVaRaFt\left(A-B\right),\CVaRaFt\left(B-A\right)\right)\\
 & \leq & \frac{1}{1-\alpha}\left(\Esp{\left(A-B\right)^{+}\mid\cF_{t}}+\Esp{\left(B-A\right)^{+}\mid\cF_{t}}\right)= \frac{\Esp{|A-B|\mid\cF_{t}}}{1-\alpha},
\end{eqnarray*}
where for the last inequality, we have used  that for $U\in\{A-B,B-A\}$,
$\inf_{x\in\R}\Esp{\frac{\left(U-x\right)^{+}}{1-\alpha}+x\mid\cF_{t}}\leq\Esp{\frac{U^{+}}{1-\alpha}\mid\cF_{t}}$.
We then have that 
\[
\left|\CVaRaFt\left(A\right)-\CVaRaFt\left(B\right)\right|^{2}\leq\frac{1}{\left(1-\alpha\right)^{2}}\Esp{\left(A-B\right)^{2}\mid \cF_{t}}.
\]
Setting, for $t\in[0,T]$, $A_{t}=-\int_{t}^{(t+\Delta)\land T}Z_{t}^{NL}\dW_{s}$, $B_{t}=V_{t}^{NL}-V_{\left(t+\Delta\right)\wedge T}^{NL}$ and
using the previous inequality, we obtain
\begin{align*}
||V-V^{NL}||_{\bS_{\beta,T}^{2}}^{2}+||Z-Z^{NL}||_{\bH_{\beta,T}^{2}}^{2} & \leq  \frac{C}{\mu^2\left(1-\alpha\right)^{2}}\Esp{\int_{0}^{T}e^{\beta t}\left(V^{NL}_{t}-V^{NL}_{\left(t+\Delta\right)\wedge T}+\int_{t}^{(t+\Delta)\land T}Z_{t}^{NL}\dW_{s}\right)^{2}\dt}.
\end{align*}
Now writing
\begin{align*}
V^{NL}_{t}-V^{NL}_{\left(t+\Delta\right)\wedge T}&+\int_{t}^{\left(t+\Delta\right)\wedge T}Z_{t}^{NL}\dW_{s} \\
& =  \int_{t}^{\left(t+\Delta\right)\wedge T}\left(-r_{s}V_{s}^{NL}+Z_{s}^{NL}\sigma_{s}^{-1}\left(r_{s}\1-\mu_{s}\right)
+ R \, C_{\alpha}\sqrt{\left(s+\Delta\right)\wedge T-s}|Z_{s}^{NL}|\right)\ds\\
 & \quad -  \int_{t}^{\left(t+\Delta\right)\wedge T}\left(Z_{s}^{NL}-Z_{t}^{NL}\right)\dW_{s}=:\Pi_{1}(t)-\Pi_{2}(t),
\end{align*}
we obtain $||V-V^{NL}||_{\bS_{\beta,T}^{2}}^{2}+||Z-Z^{NL}||_{\bH_{\beta,T}^{2}}^{2}\leq\frac{2C}{\mu^2\left(1-\alpha\right)^{2}}\Esp{\int_{0}^{T}e^{\beta t}\left(\Pi^2_{1}(t)+\Pi^2_{2}(t)\right)\dt}.$
By Jensen's inequality and the inequality \eqref{eq:det:inequality}, we get
\begin{align*}
\Esp{\int_{0}^{T}e^{\beta t}\Pi^2_{1}(t)\dt}&\leq3\Delta\Esp{\int_{0}^{T}e^{\beta t}\int_{t}^{\left(t+\Delta\right)\wedge T}\left(\left(-r_{s}V_{s}^{NL}\right)^{2} + \left(Z_{s}^{NL}\sigma_{s}^{-1}\left(r_{s}\1-\mu_{s}\right)\right)^{2} + \left(R \, C_{\alpha}\sqrt{\Delta}|Z_{s}^{NL}|\right)^{2}\right)\ds\dt}\\
&\leq3\Delta^2 \left(|r|_\infty^2+ |\sigma^{-1}(r\1-\mu)|_\infty^2 + \left(R \, C_{\alpha}\right)^2\right)\Esp{\int_{0}^{T}e^{\beta t}(|V_{t}^{NL}|^{2}+|Z_{t}^{NL}|^2)\dt}.
\end{align*}
Now invoking the uniform estimate in Proposition \ref{prop:BSDE:standard}, we finally obtain 
$\Esp{\int_{0}^{T}e^{\beta t}\left(\Pi_{1}\left(t\right)\right)^{2}\dt}\leq \tilde{K_{1}}\Delta^{2}$
for some positive constant $\tilde{K}_{1}$.
Moreover, using Ito's isometry, we obtain
\[
\Esp{\int_{0}^{T}e^{\beta t}\Pi^2_{2}(t)\dt}=\Esp{\int_{0}^{T}e^{\beta t}\int_{t}^{\left(t+\Delta\right)\wedge T}\left|Z_{s}^{NL}-Z_{t}^{NL}\right|^{2}\ds\dt}\leq e^{\beta T}\Delta \, {\cal E}^{NL}(\Delta).\]
Gathering all the previous arguments, we get \eqref{eq:MV-NL}.
The proof of Theorem \ref{theorem: approximation:price} is now complete.
\qed

\section{Numerical Examples}
\label{section:Numerical Examples}
In the absence of numerical methods to estimate the solution of the McKean Anticipative BSDE \eqref{eq:CvarBSDE} in full generality, we rather solve numerically the BSDE approximations \eqref{eq:NLBSDE} or \eqref{eq:LBSDE} as discussed in Section \ref{sec:approximationsBSDE}.
For this purpose, when the dimension $d$ is greater than one, we use the Stratified Regression Multistep-forward Dynamical Programming (SRMDP) scheme developed in \cite{gstv2016}. In our numerical tests in this section, we set the coefficients $\mu_t, \sigma_t$ and $r_t$ of the model \eqref{eq:ito market} to be constant (multi-dimensional geometric Brownian motion).
\\
\red{Note that the linear BSDE \eqref{eq:LBSDE} is solved by $V^L_t = v^L(t,S_t)$, where $S$ follows \eqref{eq:ito market} and $v$ is the solution to the (linear) PDE
\begin{equation} \label{e:linear_PDE}
\partial_t v^L(t,x) +
\bigl(r x \partial_x +\frac 12 \sigma^2 x^2 \partial_{xx} \bigr) v^L(t,x) + R \, C_{\alpha} \sqrt{\left(t +\Delta\right)\wedge T - t} \, \sigma |\delta^{BS}(t,x)| x - r \, v^L(t,x) = 0,
\end{equation}
and $v^L(T,x) = \Phi(x)$, where $\delta^{BS}$ is the option's delta in the Black-Scholes model with interest rate $r$ and volatility parameter $\sigma$.
Section \ref{s:NestedMC} is precisely devoted to the numerical solution of this PDE via an appropriate Nested Monte Carlo method.
In particular, the initial value $V^L_0 = v^L(0, S_0)$ does not depend on the drift parameter $\mu$.
The same comment applies to the non-linear BSDE \eqref{eq:NLBSDE}: this equation is solved by $V^{NL}_t = v^{NL}(t,S_t)$, where $v^{NL}$ is the solution of the (non-linear) PDE \eqref{eq:fdeq}-\eqref{eq:fdtc}, which, once again, does not depend on $\mu$.
When the initial portfolio values $V^L_0$ or $V^{NL}_0$ are evaluated via a Monte Carlo method that requires to simulate the underlying process $S$, the error of the algorithm will, in general, display a dependence with respect to the drift parameter $\mu$.
For simplicity, we take $\mu^i = r$ in our experiments.}
\\
Observe that setting $R=0$ reduces the original BSDE to the linear equation \eqref{eq:BSBSDE}.
The (explicit) solution in this case will serve as a reference value in order to assess the impact of Initial Margins.

\subsection{Finite difference method for $(V^{NL},Z^{NL})$ in dimension 1} \label{s:fin_diff}

In order to check the validity of our results, we first obtain a benchmark when $d=1$ by solving the semi-linear parabolic PDE related to the BSDE \eqref{eq:NLBSDE} when $\xi = \Phi(S_T)$, see \cite{pard:rasc:14}. By an application of  It\^{o}'s lemma, the semi-linear PDE is given by
\begin{align}
\label{eq:fdeq}
& \dfrac{\partial V}{\partial t} + \dfrac{1}{2}\sigma^2 S^2 \dfrac{\partial^2 V}{\partial S^2} + r S \dfrac{\partial V}{\partial S} + C_\alpha R \, \sigma \sqrt{(t+\Delta)\wedge T - t} \bigg| S \dfrac{\partial V}{\partial S}\bigg| - r\,V = 0, \quad  (t,S) \in [0,T) \times  \R^+, \\
\label{eq:fdtc}
& V(T,S) = \Phi(S),\quad S \in \R^+,
\end{align}
and $(V^{NL}_t,Z^{NL}_t)=(V(t,S_t),\frac{\partial V}{\partial S}(t,S_t)\sigma S_t)$.

\begin{remark}\label{remark:BS}
If $\Phi(S) = \max(S-K,0)$ or $\max(K-S,0)$ for some $K > 0,$ i.e., either a call or a put option payoff, we expect the gradient $\tfrac{\partial V}{\partial S}$
to have a constant sign.
In such a case, the PDE \eqref{eq:fdeq}--\eqref{eq:fdtc} becomes linear and in fact has an explicit solution, given by a Black-Scholes formula with time-dependent continuous dividend yield $d(t) = - C_{\alpha} R \, \sigma \sqrt{(t+\Delta) \wedge T - t} \ \mathrm{sign}(\tfrac{\partial V}{\partial S})$.
\end{remark}

We use a classical finite difference methods to solve \eqref{eq:fdeq}-\eqref{eq:fdtc} (see, for example, \cite{achd:piro:05}). First, we perform a change of variable, $x = \ln S$, so that the PDE can be rewritten in the following form for the function $v(t,x) := V(t,e^x)$:
\begin{align}
\label{eq:Xfdeq}
& \dfrac{\partial v}{\partial t} + \dfrac{1}{2}\sigma^2\dfrac{\partial^2 v}{\partial x^2} + \Bigl(r-\frac{1}{2}\sigma^2\Bigr) \dfrac{\partial v}{\partial x} + C_\alpha R \sigma \sqrt{(t+\Delta)\wedge T - t} \bigg| \dfrac{\partial v}{\partial x}\bigg| - r\,v = 0, \quad  (t,x) \in [0,T) \times  \R,\\
\label{eq:Xfdtc}
& v(T,x) = \Phi({e^x}), \quad  x \in \R.
\end{align}
We denote the finite difference domain by $D = [0,T] \times [x_{\min},x_{\max}]$ with $-\infty < x_{\min} < x_{\max} < \infty$. The domain $D$ is approximated with a uniform mesh $\mathcal{D} = \bigl\{(t^n,x_i): n = 0,1, \ldots,N, \, i = 0,1,\ldots,M \bigr\}$, where $t^n := n \Delta t$ and $x_i := x_{\min} + i \Delta x.$ Here, for $N$ time intervals, $\Delta t = T/N$ and $\Delta x = (x_{\max}-x_{\min})/M$ for $M$ spatial steps. Furthermore, we denote $v(t^n,x_i) = v_i^n$. Next, consider the following finite difference derivative approximations under the well-known $\omega$-scheme, i.e., we replace $v_i^n$ by $\omega v_i^n + (1-\omega)v_i^{n+1}$, where $\omega \in [0,1]$ is a constant parameter, such that
\eqstar{
\dfrac{\partial v}{\partial t}(t^n,x_i) &\approx \dfrac{v_i^{n+1} - v_i^n}{\Delta t}, \quad \dfrac{\partial v}{\partial x}(t_n,x_i) \approx \omega \dfrac{v_{i+1}^n - v_i^n}{\Delta x} + (1-\omega) \dfrac{v_{i+1}^{n+1} - v_i^{n+1}}{\Delta x},\\
 \dfrac{\partial^2 v}{\partial x^2}(t_n,x_i) &\approx \omega \dfrac{v_{i+1}^n - 2 v_i^n + v_{i-1}^n}{(\Delta x)^2} + (1-\omega) \dfrac{v_{i+1}^{n+1} - 2 v_i^{n+1} + v_{i-1}^{n+1}}{(\Delta x)^2}.
}
The choice $\omega = 0.5$ corresponds to Crank-Nicolson method.  We also ``linearize'' the non-linear term by treating it as explicit, i.e., at any time $t_n$ we take $\bigg | \dfrac{\partial v}{\partial x}(t^n,x_i) \bigg | \approx \bigg | \dfrac{\partial v}{\partial x}(t^{n+1},x_i) \bigg |.$

The substitution of finite difference derivative approximations in \eqref{eq:Xfdeq}-\eqref{eq:Xfdtc} along with the ``linearization'' step, leads to the following tridiagonal linear system at each time step $n=N-1,\ldots,0$ which can be solved by Thomas algorithm \cite{yg1973}: $A v^n = b^{n+1},$ with nonzero coefficients of the tridiagonal matrix $A = (a_{i,j})$ given by
\eqstar{
 a_{0,0} &= 1, \quad & a_{0,1} &= 0, &  \quad  & a_{M,M-1} = 0,  \quad  a_{M,M} = 1, \\
 a_{i,i}  &= 1 + 2\theta  \omega + \kappa \omega + \rho  \omega,  \quad & a_{i,i+1}  &= -\theta \omega - \kappa \omega, & \quad & a_{i-1,i}  = -\theta \omega, \quad i=1,\ldots,M-1, 
}
and the time dependent vector $b^{n+1}$ as:
\eqstar{
 b_0^{n+1} = v^{n+1}_{0}, \quad  b_{M}^{n+1}  = v^{n+1}_{M}, \quad b_i^{n+1} &= \theta (1-\omega) v_{i-1}^{n+1} + (1-2\theta (1-\omega) - \kappa(1-\omega) - \rho (1-\omega) ) v_i^{n+1} \\
&\quad+ (\theta  (1-\omega) + \kappa(1-\omega) ) v_{i+1}^{n+1} + \beta^{n} | v_{i+1}^{n+1} - v_{i}^{n+1}|, \quad i=1,\ldots,M-1,
}
where $v^{n+1}_{0}$, $v^{n+1}_{M}$ are given by the boundary conditions and the remaining constants are defined as below
\eqstar{
 \theta = \dfrac{ \sigma^2\Delta t}{2 (\Delta x)^2},\quad  \kappa = \dfrac{(r-\frac{ 1 }{2 }\sigma^2)\Delta t}{\Delta x}, \quad 
 \rho = r \Delta t, \quad  \beta^n = \dfrac{C_\alpha R \sigma \Delta t \sqrt{(t_n + \Delta)\wedge T - t_n}}{\Delta x}. 
}
The $i$th coordinate of vector $v^n$ is the approximation of the value $v(t^n,x_i)$. 

We set the model parameters as $T = 1$, $\sigma = 0.25$, $r   = 0.02$, $\alpha=0.99$ and $\Delta = 0.02$ (1 week) and consider three different options -- call, put and butterfly, for different strikes. We set $R=0.02$ when accounting for IM and $R=0$ otherwise. The space domain we consider for the finite difference scheme is $[\ln(10^{-6}),\ln(4 K)]$, while for the SRMDP regression algorithm, we fix the space domain to be $[-5,5]$. Furthermore, for the finite difference scheme, the number of steps are $N=10^3$ and $M=10^6$.
For the \textbf{LP0} version of the SRMDP algorithm, we consider $2800$ hypercubes, $50$ time steps, and $2500$ simulations per hypercube.
In Figure \ref{fig:callNL} and Table \ref{tab:callNL}, we present the results for implied volatilities, prices and deltas of call options with different strikes. First, we compute the values using the classical Black-Scholes formula (the curve labeled B-S $R=0$) in order to assess the impact of taking into account the IM.
Next, we solve the non-linear BSDE \eqref{eq:NLBSDE} using the three methods presented above: the exact Black-Scholes formula when IM is considered as a time-dependent dividend yield (labeled BS $R=0.02$) (see Remark \ref{remark:BS}), the finite difference method (FD) and the SRMDP algorithm. For the last method, we compute $95\%$ confidence intervals for the price and the delta of the options. 
The results for put options are given in Figure \ref{fig:putNL} and Table \ref{tab:putNL}.
In all cases, we observe that the IM has a significant impact on the implied volatility of option prices (around 20-30 basis points for standard values of the volatility). 
Finally, we consider butterfly options with payoff function 
\begin{equation} \label{e:butterfly_payoff}
\Phi(S_T) = \left(S_T - (K-2) \right)^{+} - 2 \left( S_T - K \right)^{+} + \left(S_T - (K+2) \right)^{+}
\end{equation}
in Figure \ref{fig:butterflyNL} and Table \ref{tab:butterflyNL}.
This derivative product involves three options with different strikes -- a long position in the butterfly consists in  buying a call option with strike $K-2$, a call with strike $K+2$, and selling two call options with strike $K$.
Note that the first derivative of the price of a butterfly option (the option delta) is expected to take values of both signs, therefore an explicit Black-Scholes formula is not available any more when IM is taken into account the ($R=0.02$ here).
For all the payoffs above, we observe that the SRMDP algorithm provides good accuracy when compared to the exact values or finite difference estimates.

\red{As a side remark, we note that, due to the non-linear pricing rule $V^{NL}$, the call and put prices that we compute do not satisfy the put-call parity relation. As a consequence, the implied volatilities extracted from these call and put prices do not coincide, as Figures \ref{fig:callNL} and \ref{fig:putNL} show. 
We analyse more in detail this situation in the appendix. 
In particular, we explain why, in this context, the implied volatility smile is decreasing with respect to  strike for calls and increasing for puts.}
\medskip

\begin{figure}[!htb]
  \centering
  \includegraphics[scale=0.5]{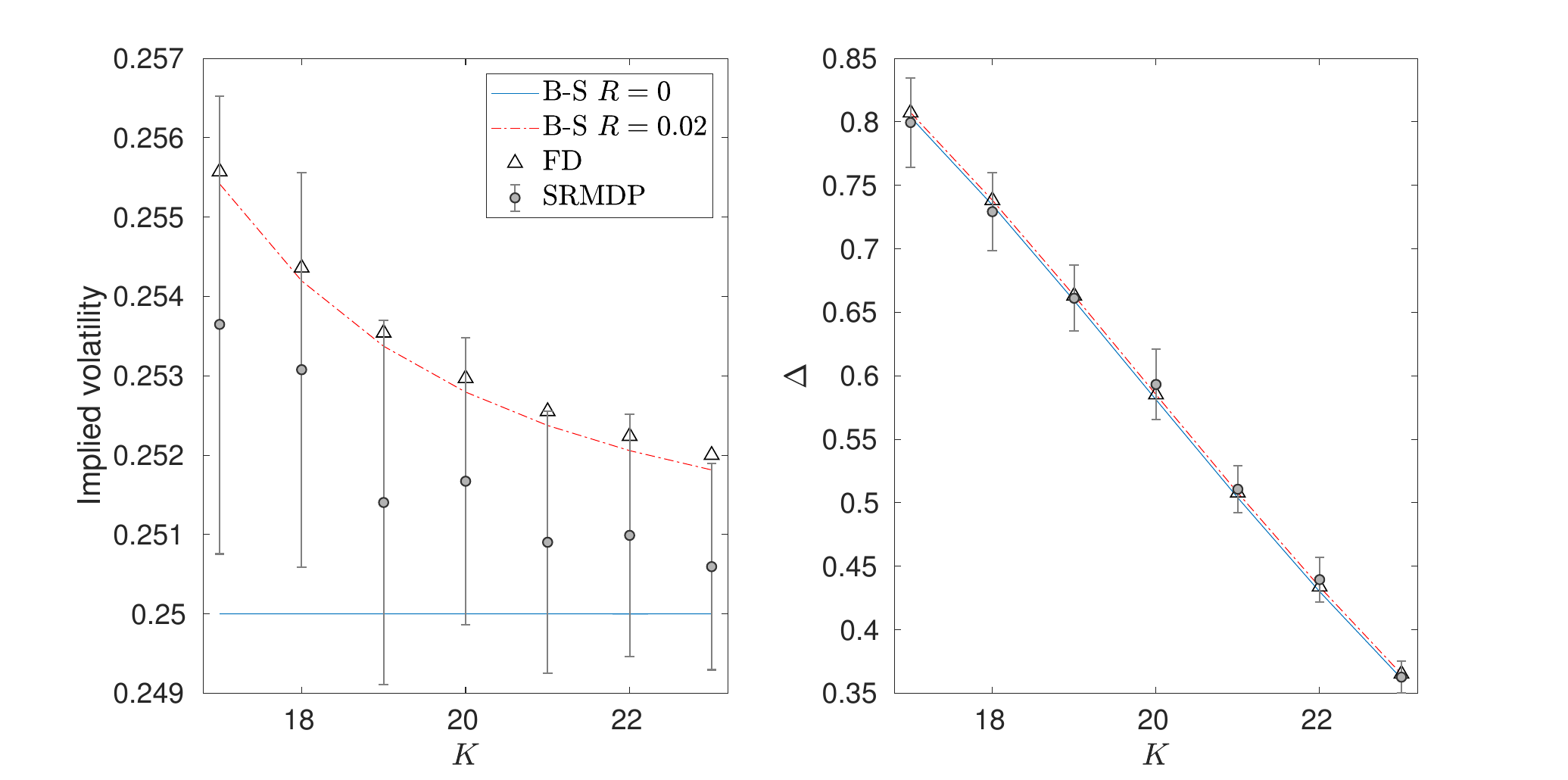}\par
  \caption{Implied volatility and delta for call options with spot value $S_0=20$ and different strikes $K$}
  \label{fig:callNL}
\end{figure}

\begin{table}[htb]
\centering
\small
\begin{tabular}{c |c c c c c c c c}
\toprule
 & \multicolumn{2}{c}{B-S $R=0$} & \multicolumn{2}{c}{B-S $R=0.02$} & \multicolumn{2}{c}{FD} & \multicolumn{2}{c}{SRMDP}\\
  \cline{2-9} 
&&&&&&& 95\% CI  & 95\% CI \\
$K$ & B-S$(0,S_0)$ & $\Delta(0,S_0)$ & B-S$(0,S_0)$ & $\Delta(0,S_0)$ & $V(0,S_0)$ & $\nabla V(0,S_0)$ &  $V_0^{NL}(S_0)$ &  $Z_0^{NL}(S_0)/(\sigma S_0)$  \\
\toprule
$17$ & $3.9534$ & $0.8037$ & $3.9835$ & $0.8073$ & $3.9844$ & $0.8072$ & $[3.9575,3.9897]$ & $[0.7641,0.8347]$ \\
 \hline
$18$ & $3.2795$ & $0.7345$ & $3.3071$ & $0.7383$ & $3.3082$ & $0.7382$ & $[3.2833,3.3161]$ & $[0.6986,0.7598]$ \\
 \hline
$19$ & $2.6863$ & $0.6592$ & $2.7111$ & $0.6631$ & $2.7123$ & $0.6630$ & $[2.6797,2.7134]$ & $ [0.6350,0.6871]$ \\
 \hline
$20$ & $2.1741$ & $0.5812$ & $2.1959$ & $0.5852$ & $2.1973$ & $0.5852$ & $[2.1730,2.2012]$ & $[0.5656,0.6207]$ \\
 \hline
$21$ & $1.7398$ & $0.5039$ & $1.7587$ & $0.5079$ & $1.7601$ & $0.5078$ & $[1.7338,1.7601]$ & $[0.4920,0.5292]$ \\
 \hline
$22$ & $1.3777$ & $0.4301$ & $1.3939$ & $0.4338$ & $1.3953$ & $0.4338$ & $[1.3734,1.3975]$ & $[0.4218,0.4571]$ \\
 \hline
$23$ & $1.0805$ & $0.3617$ & $1.0941$ & $0.3651$ & $1.0954$ & $0.3652$ & $[1.0752,1.0946] $ & $[0.3503,0.3750]$ \\
 \hline  
 \end{tabular}
\vspace{0.25pc}
\caption{Price and delta for call options with spot value $S_0=20$ and different strikes $K$.}
\label{tab:callNL}
\end{table}

\begin{figure}[!htb]
  \centering
  \includegraphics[scale=0.5]{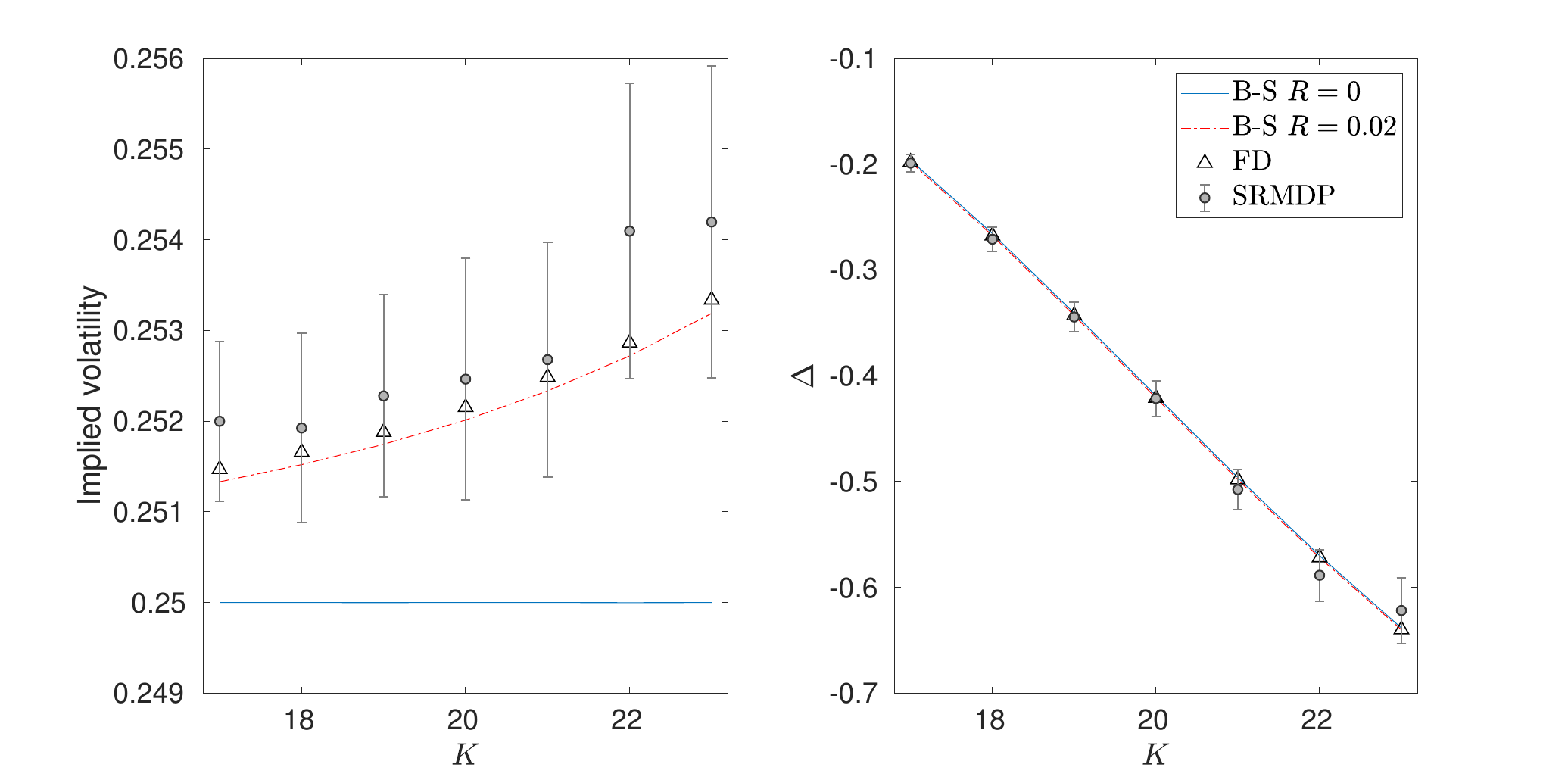}\par
  \caption{Implied volatility and delta for put options with spot value $S_0=20$ and different strikes $K.$}
  \label{fig:putNL}
\end{figure}

\begin{table}
\small
\centering
\begin{tabular}{c |c c c c c c c c}
\toprule
& \multicolumn{2}{c}{B-S $R=0$} & \multicolumn{2}{c}{B-S $R=0.02$} & \multicolumn{2}{c}{FD} & \multicolumn{2}{c}{SRMDP}\\
  \cline{2-9} 
&&&&&&& 95\% CI &95\% CI\\
$K$ & B-S$(0,S_0)$ & $\Delta(0,S_0)$ & B-S$(0,S_0)$ & $\Delta(0,S_0)$ & $V(0,S_0)$ & $\nabla V(0,S_0)$ &$V_0^{NL}(S_0)$ &  $Z_0^{NL}(S_0)/(\sigma S_0)$  \\
\toprule
$17$ & $0.6168$ & $-0.1963$ & $0.6241$ & $-0.1980$ & $0.6249$ & $-0.1981$ & $[0.6229,0.6328]$ & $[-0.2071,-0.1908]$ \\
 \hline
$18$ & $0.9231$ & $-0.2655$ & $0.9331$ & $-0.2675$ & $0.9340$ & $-0.2676$ & $[0.9289,0.9426]$ & $[-0.2827,-0.2592]$ \\
 \hline
$19$ & $1.3101$ & $-0.3408$ & $1.3229$ & $-0.3429$ & $1.3239$ & $-0.3430$ & $[1.3186,1.3350]$ & $[-0.3585,-0.3303]$ \\
 \hline
$20$ & $1.7781$ & $-0.4188$ & $1.7938$ & $-0.4209$ & $1.7949$ & $-0.4209$ & $[1.7869,1.8077]$ & $[-0.4383,-0.4046]$ \\
 \hline
$21$ & $2.3240$ & $-0.4961$ & $2.3426$ & $-0.4981$ & $2.3438$ & $-0.4981$ & $[2.3350,2.3557]$ & $[-0.5262,-0.4888]$ \\
 \hline
$22$ & $2.9421$ & $-0.5699$ & $2.9635$ & $-0.5718$ & $2.9646$ & $-0.5717$ & $[2.9615,2.9871]$ & $[-0.6128,-0.5641]$ \\
 \hline
$23$ & $3.6251$ & $-0.6383$ & $3.6490$ & $-0.6400$ & $3.6501$ & $-0.6398$ & $[3.6436,3.6695]$ & $[-0.6529,-0.5906]$ \\
 \hline  
\end{tabular}
\vspace{0.25pc}
\caption{Price and delta for put options with spot value $S_0=20$ and different strikes $K$.}
\label{tab:putNL}
\end{table}

\begin{figure}[!htb]
  \centering
  \includegraphics[scale=0.5]{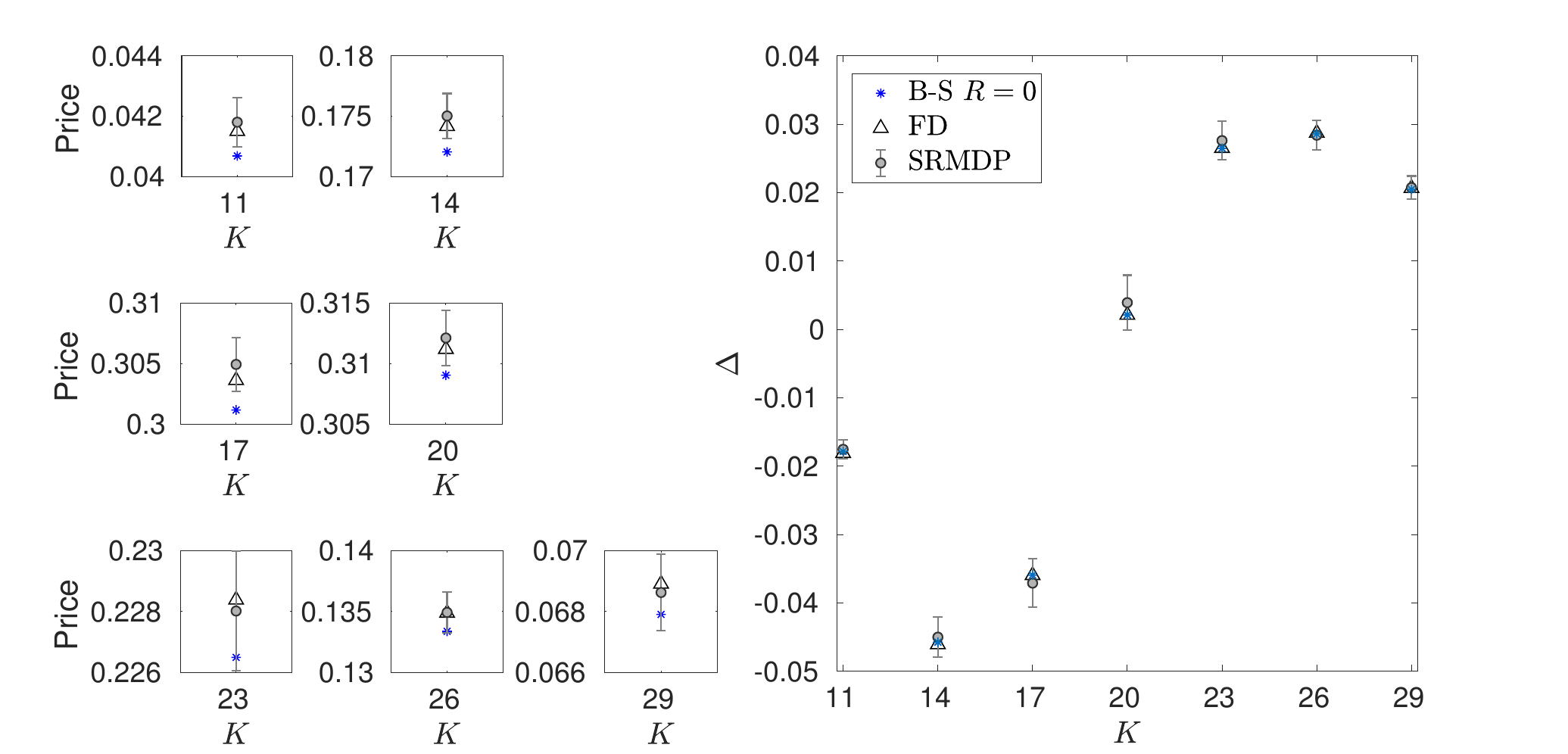}\par
  \caption{Price and delta for the butterfly option \eqref{e:butterfly_payoff} with spot value $S_0=20$ and different strikes $K$.}
  \label{fig:butterflyNL}
\end{figure}

\begin{table}
\small
\centering\begin{tabular}{c |c c c c c c}
\toprule
 & \multicolumn{2}{c}{B-S $R=0$} & \multicolumn{2}{c}{FD} & \multicolumn{2}{c}{SRMDP}\\
  \cline{2-7} 
  &&&&&95\% CI&95\% CI\\
$K$ & B-S$(0,S_0)$ & $\Delta(0,S_0)$ & $V(0,S_0)$ & $\nabla V(0,S_0)$ &  $V_0^{NL}(S_0)$ &  $Z_0^{NL}(S_0)/(\sigma S_0)$  \\
\toprule
$11$ & $0.0407$ & $-0.0178$ & $0.0415$ & $-0.0181$ & $[0.0410,0.0426]$ & $[-0.0189,-0.0162]$ \\
 \hline
$14$ & $0.1720$ & $-0.0457$ & $0.1742$ & $-0.0461$ & $[0.1732,0.1769]$ & $[-0.0479,-0.0421]$ \\
 \hline
$17$ & $0.3012$ & $-0.0359$ & $0.3036$ & $-0.0359$ & $[0.3027,0.3072]$ & $[-0.0406,-0.0336]$ \\
 \hline
$20$ & $0.3090$ & $\phantom{-}0.0021$ & $0.3112$ & $\phantom{-}0.0021$ & $[0.3098,0.3144]$ & $[-0.0001,\phantom{-}0.0079]$ \\
 \hline 
$23$ & $0.2265$ & $\phantom{-}0.0265$ & $0.2284$ & $\phantom{-}0.0265$ & $[0.2261,0.2300]$ & $[\phantom{-}0.0248,\phantom{-}0.0304]$ \\
 \hline 
$26$ & $0.1334$ & $\phantom{-}0.0286$ & $0.1349$ & $\phantom{-}0.0287$ & $[0.1333,0.1366]$ & $[\phantom{-}0.0263,\phantom{-}0.0305]$ \\
 \hline
$29$ & $0.0679$ & $\phantom{-}0.0205$ & $0.0689$ & $\phantom{-}0.0207$ & $[0.0674,0.0699]$ & $[\phantom{-}0.0190,\phantom{-}0.0224]$ \\
 \hline 
\end{tabular}
\vspace{0.25pc}
\caption{Price and delta for the butterfly option \eqref{e:butterfly_payoff} with spot value $S_0=20$ and different strikes $K$.}
\label{tab:butterflyNL}
\end{table}  

\subsection{Variance reduction for solving $(V^{NL},Z^{NL})$ using $(V^{BS},Z^{BS})$}
In order to asses the impact of using $R > 0$  on the solution of the BSDE \eqref{eq:NLBSDE}, in the case of European call and put options in one dimension, it is better to solve the BSDE difference  $(V_t^{DF},Z_t^{DF}) = (V_t^{NL} -V_t^{BS},Z_t^{NL} -Z_t^{BS})$ which has a reduced variance in the algorithm. Note that for a call option
$$Z_t^{BS} = \sigma S_t \varPhi(d_1), \quad d_{1,t} = \dfrac{\ln(S_t/K) + (r + \frac{1}{2}\sigma^2)(T-t)}{\sigma \sqrt{T-t}},$$
where $\varPhi$ is the standard Gaussian cumulative distribution function.
Therefore $|Z^{NL}_t| = |Z_t^{DF} + \sigma S_t\varPhi(d_{1,t})|$. Then, the BSDE for the difference $(V_t^{DF},Z_t^{DF})$ in the case of a call option\footnote{For a put option an analogous BSDE can be written taking into account that  $Z_t^{BS} = \sigma S_t (\varPhi(d_{1,t})-1)$.} is given by: $$ V_t^{DF}  =  0 + \int_t^T \left(-r V^{DF}_s + C_\alpha R \sqrt{(s+\Delta)\wedge T - s} |Z^{DF}_s + \sigma S_s \varPhi(d_{1,s})|\right) \ds -\int_t^T Z^{DF}_s \dW_s. $$

In Table \ref{tab:bsde_diff}, the BSDE $(V_t^{DF},Z_t^{DF})$ is solved for several call and put options using the SRMDP algorithm. Besides, exact solutions (ES) are computed through the difference between Black-Scholes formula where IM's contribution is considered as a time-dependent dividend yield and the classical Black-Scholes formula with $R=0.$

Once again these tests allow us to demonstrate that SRMDP algorithm provides accurate results in one dimension. 
\begin{table}[!htb]
\centering
\small
\begin{tabular}{c | c c c c }
\toprule
 & \multicolumn{2}{c}{ES: $\mbox{B-S } R=0.02 - \mbox{B-S } R=0$ } & \multicolumn{2}{c}{SRMDP}\\
  \cline{2-5} 
  &&&95\% CI&95\% CI\\
& B-S$(0,S_0)$ & $\Delta(0,S_0)$  & $V_0^{DF}(S_0)$ & $Z_0^{DF}(S_0)/(\sigma S_0)$  \\
\toprule
 Call, $K = 17$ & $0.0302$ & $\phantom{-}0.0036$ & $[0.0302,0.0304]$ & $[\phantom{-}0.0033,\phantom{-}0.0037]$  \\
 \hline
 Call, $K = 20$ & $0.0218$ & $\phantom{-}0.0040$ & $[0.0218,0.0219]$ & $[\phantom{-}0.0038,\phantom{-}0.0042]$  \\
 \hline
 Call, $K = 23$ & $0.0136$ & $\phantom{-}0.0035$ & $[0.0136,0.0137]$ & $[\phantom{-}0.0033,\phantom{-}0.0036]$  \\
 \hline
 Put, $K = 17$ & $0.0074$ & $-0.0017$ & $[0.0074,0.0075]$ & $[-0.0020,-0.0016]$  \\
 \hline
 Put, $K = 20$ & $0.0157$ & $-0.0021$ & $[0.0157,0.0158]$ & $[-0.0023,-0.0019]$  \\
 \hline
 Put, $K = 23$ & $0.0239$ & $-0.0016$ & $[0.0239,0.0240]$ & $[-0.0018,-0.0015]$  \\
 \hline  
\end{tabular}
\vspace{0.3cm}
\caption{SRMDP algorithm for the BSDE $(V_t^{DF},Z_t^{DF})$.}
\label{tab:bsde_diff}
\end{table}

\subsection{Nested Monte Carlo for computing $(V^L,Z^L)$ in dimension 1} \label{s:NestedMC}

As discussed in Section \ref{sec:approximationsBSDE}, we can further approximate the solution of the non-linear BSDE $V^{NL}$ by the linear BSDE $V^L$ with external source term $Z = Z^{BS}.$ In this case, we have an explicit representation for $Z_t^{BS}$, given by
\eqstar{
Z_t^{BS}  = \partial_s v^{BS}(t,S_t) \, \sigma S_t,
} 
where $v^{BS}(t,s) := \E[ e^{-r(T-t)} \Phi(S_T) | S_t = s ].$  We can use the likelihood ratio method of Broadie and Glasserman \cite{broadie1996estimating} (see also the automatic differentiation formula in \cite[Proposition 6.2.1]{nualart:06}) to rewrite the derivative in terms of an expectation, getting 
\eqlnostar{eq:formulaDelta}{
Z_t^{BS}(s) = \frac{\partial}{\partial s}  \E[ e^{-r(T-t)} \Phi (S_T) | S_t =s ] \sigma s 
&= \E\left[ e^{-r(T-t)} \Phi(S_T) \frac{W_T - W_t}{\sigma S_t (T-t)}  \Big | S_t =s \right] \sigma s \nonumber
\\
&= \E\left[ e^{-r(T-t)} \bigl(\Phi(S_T) - \Phi(S_t) \bigr) \frac{W_T - W_t}{T-t} \Big | S_t =s \right].
}
The additional conditionally centered term $-\Phi(S_t) \frac{W_T - W_t}{T-t}$ that we have introduced in the last expression plays the role of a control variate, allowing to reduce the variance in the Monte-Carlo simulation.
Recall that, in the linear BSDE \eqref{eq:LBSDE} with source term $Z^{BS}$, we have \red{$V_t = v^L(t,S_t)$, where $v^L$ solves the PDE \eqref{e:linear_PDE}. 
The classical Feynman-Kac representation of the solution to this PDE yields}
\[
V_{0}^{L} =  \mathbb{E}\left[e^{-rT}\Phi\left(S_{T}\right)+\int_{0}^{T}e^{-rs}\left(R \, C_{\alpha}|Z_{s}^{BS}|\sqrt{\left(s+\Delta\right)\wedge T-s}\right)\ds\right],
\]
which we can rewrite as
\begin{equation} \label{e:formula_V}
V_{0}^{L} = \mathbb{E}\left[e^{-rT}\Phi\left(S_{T}\right)+Te^{-rU} R \, C_{\alpha}|Z_{U}^{BS}|\sqrt{\left(U+\Delta\right)\wedge T-U} \right],
\end{equation}
where $U$ is an independent random variable uniformly distributed on $[0,T]$.
We introduce this additional randomization so to avoid any discretisation of the integral $\int_0^T \ds$ -- which would introduce a bias.  
\red{Taking into account that $Z_0^L$ is the derivative of $V_0^L$ with respect to the risky asset $S_0$ times the diffusion coefficient of $S$ computed at time 0, we apply again automatic differentiation formula (see \cite[Proposition 6.2.1]{nualart:06}) on \eqref{e:formula_V} to obtain}
\begin{equation} \label{e:formula_Z}
\begin{aligned}
Z_{0}^{L} = \mathbb{E}\bigg[e^{-rT}\Phi\left(S_{T}\right)\frac{W_T}{T} +  Te^{-rU}\left(R \, C_{\alpha}|Z_{U}^{BS}|\sqrt{\left(U+\Delta\right)\wedge T-U}\right)\frac{W_U}{U}\bigg].
\end{aligned}
\end{equation}

We solve the linear BSDE $V^{L}$ for different payoffs (call, put and butterfly option) with a Nested Monte-Carlo algorithm (Nested MC), and compare again with finite difference method as in the previous sections.
\red{The Nested MC algorithm consists in approximating the outer expectation in \eqref{e:formula_V} and in \eqref{e:formula_Z} with an empirical mean over $M$ i.i.d.\til samples, and the inner expectation \eqref{eq:formulaDelta} with an empirical mean over $N$ i.i.d.\til samples.
More precisely, the Nested MC estimator for $V_0^L$ is
\begin{multline*}
\widehat{V}_0^L
=
\frac 1 M \sum_{m = 1}^M\Bigg[
e^{-r \,T} \Phi \bigl(S_0 e^{(r- \sigma^2/2)T + \sigma \sqrt{T} X^m} \bigr)
\\
+
T  e^{-r \, T} 
R \, C_{\alpha}
\biggl|
\frac 1 N
\sum_{n=1}^N
\biggl(\Phi \Bigl(S_m e^{(r- \frac{\sigma^2}2)(T - U^m) + \sigma \sqrt{T - U^m} Y^{m,n}} \Bigr)
-
\Phi(S^m) \biggr)
\frac{Y^{m,n}}{\sqrt{T-U^m}}
\biggr|
\sqrt{\left(U^m+\Delta\right)\wedge T - U^m}\Bigg],
\end{multline*}
where $S^m = S_0 e^{(r- \sigma^2/2) U^m + \sigma \sqrt{U^m} X^m}$, and $(X^m)_{m =1, \dots, M}$, $(Y^{m,n})_{m =1, \dots, M}^{n =1, \dots, N}$ are independent standard normal random variables.
More details and an analysis of the bias and variance of analogous Nested MC estimators can be found in \cite{GordyJuneja} and \cite{GilesHajiAli}.
Note that the problem is eventually non-linear, due to the presence of the absolute value function applied to the inner expectation.
In our tests, we use $N = 100$ and $M=100000$.}
The results are presented in Table \ref{tab:bsde_lin}.
\red{In this one-dimensional example, we observe a good agreement between the Nested MC algorithm and the finite difference solution, which we use as a benchmark.
While the finite difference can typically hardly go beyond dimension $d=2$ or $d=3$, the Nested MC provides a competitive method to estimate $(V^L,Z^L)$ in higher dimensions.
Moreover, comparing the values in Table \ref{tab:bsde_lin} with the corresponding lines in Tables \ref{tab:callNL}-\ref{tab:putNL}-\ref{tab:butterflyNL}, one can observe that the non-linear price and delta $V^{NL}, Z^{NL}$ are well approximated by their linear counterparts $V^L, Z^L$, as predicted by Theorem \ref{theorem: approximation:price}. 
}

\begin{table}[!htb] 
\centering
\small
\begin{tabular}{c | c c c c }
\toprule
 & \multicolumn{2}{c}{FD} & \multicolumn{2}{c}{Nested MC}\\
  \cline{2-5} 
&&&95\% CI&95\% CI\\
& $V(0,S_0)$ & $\nabla V(0,S_0)$  &  $V_0^L(S_0)$ & $Z_0^L(S_0)/(\sigma S_0)$  \\
\toprule
 Call, $K = 17$ & $3.9843$ & $0.8072$ & $[3.9796,3.9856]$ & $[0.8059,0.8082]$  \\
 \hline
 Call, $K = 20$ & $2.1971$ & $0.5852$ & $[2.1931,2.1979]$ & $[0.5843,0.5862]$  \\
 \hline
 Call, $K = 23$ & $1.0953$ & $0.3653$ & $[1.0924,1.0958]$ & $[0.3644,0.3659]$  \\
 \hline
 Put, $K = 17$ & $0.6249$ & $-0.1981$ & $[0.6233,0.6249]$ & $[-0.1983,-0.1977]$  \\
 \hline
 Put, $K = 20$ & $1.7950$ & $-0.4209$ & $[1.7937,1.7966]$ & $[-0.4216,-0.4205]$  \\
 \hline
 Put, $K = 23$ & $3.6502$ & $-0.6398$ & $[3.6468,3.6511]$ & $[-0.6407,-0.6393]$  \\
 \hline 
  Butterfly, $K = 11$ & $0.0414$ & $-0.0181$ & $[0.0412,0.0415]$ & $[-0.0181,-0.0180]$  \\
 \hline  
  Butterfly, $K = 20$ & $0.3112$ & $0.0021$ & $[0.3112,0.3119]$ & $[0.0021,0.0022]$  \\
 \hline  
  Butterfly, $K = 29$ & $0.0689$ & $0.0206$ & $[0.0686,0.0690]$ & $[0.0206,0.0207]$  \\
 \hline  
\end{tabular}
\vspace{0.3cm}
\caption{Nested MC algorithm for the BSDE $(V_t^{L},Z_t^{L})$.}
\label{tab:bsde_lin}
\end{table}

\subsection{Basket options in higher dimensions}
In this section we solve the non-linear BSDE in high dimensions using SRMDP algorithm. In this setting, traditional full grid methods like finite difference are not able to tackle the problem for dimension greater than $3$.

We consider call option on a basket of $d$ assets where the asset process is modelled by multi-dimensional geometric Brownian motion with constant correlation $\rho_{ij}= \rho = 0.75$ for $i\neq j$ and constant volatility $ \sigma_0 = 0.25.$ The full-rank volatility matrix $\sigma$ in model \eqref{eq:ito market} is then given by 
$$ \sigma\sigma^\top = \Sigma \text{ where } \Sigma:= \left( \Sigma_{ij}\right)_{1 \leq i, j \leq d} \text{ with } \Sigma_{ij} = \sigma^2_0 \rho, i \neq j \text{ and }  \Sigma_{ii} = \sigma^2_0.$$
Then, $A_0 :=\left( \left((\sigma^1 S^1_0)^\top, \ldots, (\sigma^d S^d_0)^\top \right)^\top\right)^{-1}$  where $\sigma^i$ is the $i$th row of $\sigma.$
The payoff is given by $$\Phi(S_T^1,\ldots,S_T^d) = \left( \sum_i p^i S_T^i - K\right)^+ .$$
The option expiration is set to $T=1$ year and the interest rate $r=0.02$. We suppose that weights $p_i =\frac{1}{d}$ for all $i$. The strike price $K$ equals $20$ and the initial values of the assets $S_0 = (S_0^1,\ldots,S_0^d)$ are specified in Table \ref{tab:basket}. The rest of the model parameters are the same as earlier. In this table, we present prices and deltas for different basket options with several underlyings. In the first column, classical crude Monte Carlo values are shown (MC $R=0$, IM was not considered). In the second column SRMDP values are displayed taking into account IM. \red{In order to cope with the curse of dimensionality, here we used the \textbf{LP1} version of the SRMDP algorithm, with $500, 200, 50, 20$ hypercubes and $1000, 1500, 2000, 2500$ simulations per hypercube, for $d=2,3,4,5$ respectively, 5 time steps were considered. With respect to Table \ref{tab:basket}, in the absence of analytical solutions to compare with, we can only assert that Monte Carlo numerical results taking into account IM are close to those without IM, as seen before with the same Monte Carlo solvers executed over one dimensional problems. }
\begin{table}[!htb]
\centering
\small
\begin{tabular}{ c | c r c r }
\toprule
 & \multicolumn{2}{c}{MC ($R=0$)} & \multicolumn{2}{c}{SRMDP ($R=0.02$)}\\
  \cline{2-5} 
 & 95\% CI  & 95\% CI  & 95\% CI  & 95\% CI   \\
 $S_0$ & $V_0^{BS}(S_0)$ & $Z_0^{BS}(S_0)A_0$ &  $V_0^{NL}(S_0)$ & $Z_0^{NL}(S_0) A_0$  \\
\toprule
 $(18,20)$ & $[1.5102,1.5113]$ & $[-0.0685,-0.0682]$ & $[1.5015,1.5468]$ & $[-0.0772,-0.0649]$  \\
           &                   & $[0.6237,0.6245]$ &                   & $[0.6297,0.6556]$  \\
 \hline
 $(18,20,22)$ & $[2.0067,2.0081]$ & $[-0.4676,-0.4671]$ & $[1.9915,2.0447]$ & $[-0.4756,-0.4641]$  \\
              &                   & $[0.3813,0.3817]$ &                   & $[0.3873,0.4167]$  \\
              &                   & $[0.7435,0.7443]$ &                   & $[0.7725,0.7882]$  \\    
 \hline 
 $(16,18,20,22)$ & $[1.4470,1.4481]$ & $[-0.6589,-0.6582]$ & $[1.4677,1.5090]$ & $[-0.6689,-0.6182]$  \\
                 &                   & $[0.1062,0.1064]$ &                   & $[0.0962,0.1374]$  \\
                 &                   & $[0.4334,0.4338]$ &                   & $[0.4234,0.4628]$  \\
                 &                   & $[0.6093,0.6100]$ &                   & $[0.5943,0.6310]$  \\               
 \hline    
 $(16,18,20,22,24)$ & $[1.9672,1.9676]$ & $[-1.0467,-1.0455]$ & $[1.9928,2.0692]$ & $[-1.0767,-1.0242]$  \\
                    &                   & $[-0.0855,-0.0852]$ &                   & $[-0.1155,-0.0752]$  \\
                    &                   & $[0.3342,0.3347]$ &                   & $[0.3042,0.3467]$  \\
                    &                   & $[0.5655,0.5662]$ &                   & $[0.5355,0.5762]$  \\
                    &                   & $[0.7039,0.7047]$ &                   & $[0.6839,0.7167]$  \\                                    
 \hline   
\end{tabular}
\vspace{0.3cm}
\caption{Prices and deltas for the basket call option.}
\label{tab:basket}
\end{table}

\red{\appendix 
\section{Appendix. On the implied volatility smiles from call an put options with IM correction}

\noindent We analyse more in detail the behavior of the implied volatility smiles obtained from the non-linear call and put option prices in Figures \ref{fig:callNL} and \ref{fig:putNL}.
As already pointed out at the end of Section \ref{s:fin_diff}, theses prices do not satisfy put-call parity, which explains why the implied volatilities from calls and puts do not coincide.

Under the non-linear pricing rule $V^{NL}$, the price of a call option is
\[
\begin{aligned}
V_0^{NL} = C(T,K) &:= 
\esp \Bigl[
e^{-rT}
\left( S_0 e^{rT - \int_0^T d^{\mathrm{Call}}(t)dt + \sigma \sqrt T G - \frac 12 \sigma^2 T} - K \right)^+
\Bigr]
\\
&=
F^T(r,\dC) e^{-rT} 
\esp \left[
\left( e^{\sigma \sqrt T G - \frac 12 \sigma^2 T} - \frac{K}{F^T(r,\dC) } \right)^+
\right]\\
&:= 
F^T(r,\dC) e^{-rT} C_{\mathrm{BS}} \left(\frac K{F^T(r,\dC)}, \sigma \sqrt T \right)
\end{aligned}
\]
where $F^T(r,d) = S_0 e^{rT - \int_0^T d(t)\dt}$ is the usual forward price of $S$ with interest rate $r$ and dividend rate $d(t)$, and we denote $C_{\mathrm{BS}}(X,v)$ the normalized BS call price with moneyness $X$ and total implied volatility $v$ (so that $C_{\mathrm{BS}}(0,v) = 1$).
In our case,
\[
\begin{aligned}
&d^{\mathrm{Call}}(t) = - C_{\alpha} R \, \sigma \sqrt{(t+\Delta) \wedge T - t},
\\
&d^{\mathrm{Put}}(t) = C_{\alpha} R \, \sigma \sqrt{(t+\Delta) \wedge T - t}.
\end{aligned}
\]
How do we imply the implied volatility $\sigC$? Using the same reference Black-Scholes price that leads to a constant smile when $R=0$ (as in Figures \ref{fig:callNL}-\ref{fig:putNL}): that is, we impose
\begin{multline} \label{e:def_implVol}
C(T,K) =
F^T(r,\dC) e^{-rT} C_{\mathrm{BS}} \left(\frac K{F^T(r,\dC)}, \sigma \sqrt T \right)
\\
=
F^T(r,d=0) e^{-rT} C_{\mathrm{BS}} \left(\frac K{F^T(r,d=0)}, \sigC(K) \sqrt T \right)
=
\mathrm{RefCallPrice}(T,K)
\end{multline}
and analogously for $\sigP(\cdot)$. Equation \eqref{e:def_implVol} is of the form
\[
C_{\mathrm{BS}} \left(X, \sigC(X F_1) \sqrt T \right)
 = \frac{F_2}{F_1} C_{\mathrm{BS}} \left(X \frac{F_1}{F_2}, \sigma \sqrt T \right)
= \esp \left[\left(\frac{F_2}{F_1}e^{\sigma \sqrt T -\frac12 \sigma^2 T}- X\right)^+ \right],
\qquad X \ge 0,
\]
with $F_2 = F^T(r,\dC) \maj F_1 = F^T(r,0)$.
This already shows that: 
\begin{enumerate}
\item There is a smile: $\sigC(\cdot)$ is not constant.
\item $\sigC(K) \ge \sigma$ for all $K$, since $F_2 \ge F_1$.
\end{enumerate}
The same conclusions (1) and (2) also hold for $\sigP(\cdot)$, for which we have $\dP \ge0$, hence $F^T(r,\dP) \mino F^T(r,0)$.

Why are the implied volatilities of calls and puts with IM different?
Because the reference price satisfies the following parity relation
\[
\mathrm{RefCallPrice}(T,K) -
\mathrm{RefPutPrice}(T,K) 
= e^{-rT} \left( F^T(r,0) - K \right)
\]
while the call and put prices with IM satisfy 
\[
\begin{aligned}
C(T,K) - P(T,K)
&=
e^{-rT} \biggl[
F(\dC)  
C_{\mathrm{BS}}(\dC)  
- 
F(\dP)
P_{\mathrm{BS}}(\dP)  
\biggr]
\\
&= e^{-rT} \left(F(\dC) - K \right)
\\
&\quad
+ e^{-rT} \biggl[
F(\dC) \left(P_{\mathrm{BS}}(\dC) - P_{\mathrm{BS}}(\dP) \right)
\\
&\quad 
+ P_{\mathrm{BS}}(\dP) \left( F(\dC) - F(\dP) \right)
\biggr],
\end{aligned}
\]
where we have slightly simplified notations by setting $F(\dC) = F^T(r,\dC)$, $F(\dP) = F^T(r,\dP)$, and $C_{\mathrm{BS}}(d) = C_{\mathrm{BS}} \left(\frac K{F^T(r,d)}, \sigma \sqrt T \right)$, $P_{\mathrm{BS}}(d) = P_{\mathrm{BS}} \left(\frac K{F^T(r,d)}, \sigma \sqrt T \right)$.
Note that the right hand side is not an affine function of $K$.

We can now assess the sign of the ATMF skew $\partial_K \sigma(K)|_{K = F^T(r,0)}$.
Taking derivatives with respect to $K$ in \eqref{e:def_implVol}, we get
\[
\begin{aligned}
\Prob \left(S_T^{r, \dC, \sigma} \maj K\right)
&=
F^T(r,0) 
\left[
\partial_K C_{\mathrm{BS}} \biggl(\frac K{F^T(r,0)}, \sigC(K) \sqrt T \biggr)
+
\partial_\sigma C_{\mathrm{BS}} \biggl(\frac K{F^T(r,0)}, \sigC(K) \sqrt T \biggr)
\partial_K \sigC(K)
\right]
\\
&=
-\Prob \left(S_T^{r,d=0, \sigC(K)} \maj K\right)
+
K \sqrt{T} \phi
\biggl(
d_2
\biggl(
\frac K{F^T(r,0)}, \sigC(K) \sqrt T
\biggr)
\biggr)
\partial_K \sigC(K),
\end{aligned}
\]
or yet
\begin{equation} \label{e:implVolSkew}
\sqrt T \partial_K \sigC(K)
=
\frac
{-\Prob \left(S_T^{r, \dC, \sigma} \maj K\right) + \Prob \left(S_T^{r,d=0, \sigC(K)} \maj K\right)}
{K \phi
\biggl(
d_2
\biggl(
\frac K{F^T(r,0)}, \sigC(K) \sqrt T
\biggr)
\biggr)}
\end{equation}
where $\phi(\cdot)$ is the standard normal density.
Note that $\partial_K \sigC(K)$ has the sign of the numerator $\mathrm{Num^{CallSkew}}$ in \eqref{e:implVolSkew}.
Setting
\[
q(d,\sigma)
:= \Prob \left(S_T^{r, d, \sigma} \maj K\right)
= N \biggl(
d_2 \biggl(
\frac K{F^T(r,d)}, \sigma \sqrt T
\biggr)
\biggr)
= 
N \biggl(
\frac{\ln \frac{F^T(r,d)} K}{\sigma \sqrt T}
- \frac 12 \sigma \sqrt T
\biggr)
\]
we have
\[
\partial_\sigma q(d,\sigma)
= \sqrt T \phi(d_2) 
\biggl(
-\frac{\ln \frac{F^T(r,d)} K}{\sigma^2 T}
- \frac 12
\biggr),
\qquad
\partial_d q(d,\sigma)
= \frac 1{\sigma \sqrt T}
\phi(d_2) 
\frac{\partial_d F^T(r,d)}{F^T(r,d)}
=
- \frac T{\sigma \sqrt T}
\phi(d_2) 
\]
the last derivative being taken in the case of a constant dividend rate $d$ (which is our case up to time $t \le T-\Delta$).
We can write the numerator in \eqref{e:implVolSkew} as 
\[
\begin{aligned}
\mathrm{Num^{CallSkew}}
&=
q(d=0, \sigC(K)) - q(\dC, \sigma)
\\
&=
q(d=0, \sigC(K)) - q(d=0, \sigma)
+ q(d=0, \sigma) - q(\dC, \sigma)
=: 
\mathrm{Num_1^{CallSkew}}
+
\mathrm{Num_2^{CallSkew}}.
\end{aligned}
\]
We now consider the ATMF point, that is $K=F^T(r,0)$. 
Since, in this case,
\[
\partial_\sigma q(d=0,\sigma)
= - \frac 12 \sqrt T \phi(d_2) \mino 0,
\]
we have $\mathrm{Num_1^{CallSkew}} \mino 0$, recalling that $\sigC(K) \maj \sigma$.
Moreover, since $\partial_d q(d,\sigma) \mino 0$, we also have $\mathrm{Num_2^{CallSkew}} \mino 0$, recalling that $0 \maj \dC$.
Overall, we have obtained
\[
\sqrt T \partial_K \sigC(K)|_{K = F^T(r,0)}
\mino 0.
\]
}


\end{document}